\newcommand{\beforecline}{\\[2pt]}        % 2 pt padding at bottom of current row
\newcommand{\aftercline}{\rule{0pt}{10pt}} % 2 pt padding at top of next row
\newcommand{\indep}{\perp \!\!\! \perp}
\newtheorem{lemma}{Lemma}
\newtheorem{condition}{Condition}
\newtheorem{definition}{Definition}
\newtheorem{proposition}{Proposition}
\newtheorem{theorem}{Theorem}
\newcommand{\bfm}[1]{\ensuremath{\mathbf{#1}}}
\def\ba{\bfm a}     
\def\bb{\bfm b}   \def\bB{\bfm B}
\def\bh{\bfm h}   \def\bH{\bfm H}
   \def\bP{\bfm P}
\def\bx{\bfm x}     
\def\by{\bfm y}
\newcommand{\bfsym}[1]{\ensuremath{\boldsymbol{#1}}}
\def\bbeta{\bfsym \beta}
\def\btheta{\bfsym \theta}
\renewcommand{\hat}{\widehat}
\renewcommand{\tilde}{\widetilde}
\DeclareMathOperator*{\argmin}{arg\,min}
\definecolor{rp}{RGB}{83,54,106}
\def\boxit#1{\vbox{\hrule\hbox{\vrule\kern6pt\vbox{\kern6pt#1\kern6pt}\kern6pt\vrule}\hrule}}
\begin{document}

\title{StaRQR-K: False Discovery Rate Controlled Regional Quantile Regression}
\author[1,2]{Sang Kyu Lee}
\author[1]{Tongwu Zhang}
\author[,1]{Hyokyoung G. Hong$^{*}$}
\author[,3]{Haolei Weng\footnote{denotes equal contribution}}
\affil[1]{Biostatistics Branch, Division of Cancer Epidemiology and Genetics, National Cancer Institute, National Institutes of Health}
\affil[2]{Department of Applied Statistics, Konkuk University}
\affil[3]{Department of Statistics and Probability, Michigan State University}

\date{}
\maketitle
%\linenumbers
\begin{abstract}
{Quantifying how genomic features influence different parts of an outcome distribution requires statistical tools that go beyond mean regression, especially in ultrahigh-dimensional settings. Motivated by the study of LINE-1 activity in cancer, we propose StaRQR-K, a stabilized regional quantile regression framework with model-X knockoffs for false discovery rate control. StaRQR-K identifies CpG sites whose methylation levels are associated with specific quantile regions of an outcome, allowing detection of heterogeneous and tail-sensitive effects. The method combines an efficient regional quantile sure independence screening procedure with a winsorizing-based model-X knockoff filter, providing false discovery rate (FDR) control for regional quantile regression. Simulation studies show that StaRQR-K achieves valid FDR control and substantially higher power than existing approaches. In an application to The Cancer Genome Atlas head and neck cancer cohort, StaRQR-K reveals quantile-region-specific associations between CpG methylation and LINE-1 activity that improve out-of-sample prediction and highlight genomic regions with known functional relevance.}
\end{abstract}
%Many scientific problems, such as identifying epigenetic regulators of gene activity in cancer, involve discovering covariates with ``switch-like" effects that are active only within specific regions of a response distribution. Standard methods often miss these signals, and existing regional approaches for ultrahigh-dimensional data lack rigorous false discovery rate (FDR) control. We propose StaRQR-K, a novel statistical framework that formally identifies such region-specific effects with guaranteed FDR control. StaRQR-K integrates regional quantile regression, which models effects over an entire interval of quantile levels, with a model-X knockoff filter. A core technical innovation is a response-censoring strategy, ensuring statistical validity. The procedure is embedded in a multi-step pipeline combining efficient feature screening and a stabilization step to enhance power and replicability. Extensive simulations validate that StaRQR-K successfully controls the FDR while achieving higher power than baseline methods. Applied to Head and Neck Squamous Cell Carcinoma data, StaRQR-K identified numerous CpG locations with quantile-specific associations to LINE-1 retrotransposon activity, supporting a model where targeted demethylation at specific ``hotspot" source elements is a key driver of genomic instability in cancer.

%\begin{keyword}
%\kwd{Regional Quantile Regression}
%\kwd{False Discovery Rate}
%\kwd{Knockoff Filter}
%\kwd{Variable Selection}
%\kwd{Ultrahigh-dimensional Data}
%\kwd{Cancer Epigenomics}
%\end{keyword}

\sloppy

\section{Introduction}
\label{sec:intro}

{

Long Interspersed Nuclear Element 1 (LINE-1) retrotransposons are repetitive DNA sequences that can copy and insert themselves into new genomic locations. These insertions compromise genomic integrity and are frequently observed in cancer. Methylation at LINE-1 promoter regions typically suppresses such activity, whereas hypomethylation in tumors can reactivate LINE-1, leading to increased insertions. Because LINE-1 activity, reflected by the total number of transposable element (TE) insertions, indicates genomic instability, it has attracted growing interest as a marker of cancer development and progression. Understanding the molecular mechanisms that influence LINE-1 activation may offer new perspectives on the biological processes underlying tumorigenesis. This study investigates whether methylation levels at CpG sites within LINE-1 promoter regions are associated with elevated LINE-1 activity. Since LINE-1 activity cannot be directly observed, we use the total number of TE insertions as a proxy reflecting cumulative mobilization events. Of particular interest is whether specific CpG sites are linked to the upper tail of the TE insertion distribution, where genomic instability is most pronounced.

The statistical challenges are considerable. All samples in this study are derived from tumor tissues, where LINE-1 activity is already present to some degree. In our data, the total number of TE insertions is at least one for every sample, with the majority showing low counts and a few exhibiting extremely high counts. This distribution reflects substantial heterogeneity in LINE-1 activation among tumors and motivates modeling approaches that focus on the upper tail rather than on the mean. The analysis further involves approximately 11,500 CpG sites within or near LINE-1 promoter regions, each represented by its methylation level, while only a small subset is expected to be associated with elevated LINE-1 activity. This ultrahigh-dimensional structure requires a variable selection procedure capable of reliably identifying CpG sites linked to higher, rather than average, activity levels. To address these challenges, we propose a new framework based on regional quantile regression (RQR), which models associations across a range of quantiles rather than at a single level, thereby improving stability and revealing quantile-dependent signals. The method integrates two complementary components: a screening step with theoretical sure-screening guarantees that enhances computational scalability while retaining potentially relevant CpG sites, and a knockoff-based variable selection procedure that enables valid false discovery rate (FDR) control in ultrahigh-dimensional settings. This unified approach offers a principled and reproducible framework for identifying CpG sites associated with elevated LINE-1 activity.

Variable screening for ultrahigh-dimensional data has been widely studied since the seminal work on Sure Independence Screening (SIS), which ranks covariates by their marginal associations with the response \citep{fan2008sis}. Although SIS and its extensions \citep{fan2010glm, hall2009using, fan2011nonparametric} have been successful in various models, mean-based screening criteria are often sensitive to heavy-tailed errors and heteroscedasticity. Quantile regression (QR) provides a more robust alternative by characterizing associations across the conditional response distribution \citep{koenker1978regression}, motivating high-dimensional QR methods using penalization or marginal screening \citep{wang2012qr, he2013quantile, fan2014arlasso, ma2017qpcor, Lee2023Quantile, Park2023Varying}. However, most existing QR-based screening methods target a single quantile level, which can produce unstable selections when effects vary across quantiles \citep{zheng2015globally}. Regional quantile regression extends QR by modeling effects over an interval of quantiles, offering a more coherent view of distributional patterns \citep{PARK201713, zheng2015globally, yoshida2021}. Yet, no existing regional quantile screening methods incorporate formal FDR control. As noted by \cite{liu2020}, many high-dimensional screening procedures use conservative thresholds that compromise power or lack theoretical FDR guarantees. Building on the model-X knockoff framework \citep{barber2015fdr, candes2018modelx}, our method extends valid FDR control to the regional quantile regression setting, where standard knockoff constructions fail because the conditional independence assumption does not hold across quantiles. To address this, we introduce a restricted quantile representation, using winsorization, that confines the analysis to a subset of quantiles where the relationship between predictors and the outcome remains valid, allowing valid knockoff generation within this range. A subsequent stabilization step further enhances reproducibility and power. The resulting procedure provides robust and interpretable variable selection with FDR control in ultrahigh-dimensional regional quantile analysis. 

{This paper contributes in several ways. Most importantly, we develop the first statistical framework for variable selection in regional quantile regression that effectively targets FDR control in high-dimensional settings. Building on this main contribution, we add several specific components. First, we incorporate a regional quantile sure independence screening step that prioritizes features showing evidence of quantile-region–specific associations. Second, we establish a winsorizing-based transformation that enables the use of valid knockoff statistics in the knockoff filter. Third, we demonstrate the practical value of the method through an application to Head and Neck Squamous Cell Carcinoma (HNSC) data from The Cancer Genome Atlas (TCGA). LINE-1 activity is elevated in HNSC tumors and is widely recognized as a marker of genomic instability. Our analysis identifies CpG sites whose methylation relates to variation in LINE-1 activity, offering new statistical and biological insight into its epigenetic regulation in cancer.}
% This paper contributes in several ways. First, we develop a statistical framework for variable selection in regional quantile regression that provides FDR control in high-dimensional settings. Second, we design a stabilized  algorithm that ensures robust performance and scalability for ultrahigh-dimensional data. Finally, we demonstrate the utility of the proposed method through an application to Head-Neck Squamous Cell Carcinoma (HNSC) data from The Cancer Genome Altas (TCGA) study. LINE-1 activity has been reported in HNSC tumors and is broadly recognized as a marker of genomic instability in cancer. Our analysis identifies CpG sites within the promoter regions of LINE-1 elements whose methylation is associated with variation in LINE-1 activity, providing new evidence on the epigenetic regulation of LINE-1 in cancer.

The remainder of the paper is organized as follows. Section \ref{sec:main} introduces the proposed screening procedure and the associated false discovery rate (FDR) control method for the regional quantile regression model, along with its computational and theoretical developments. Section \ref{c3:sec:sim} presents simulation studies showing that the proposed screening step yields smaller minimum model sizes than competing approaches and that the winsorizing-based knockoff procedure achieves valid FDR control with higher power and improved stability relative to naive knockoff baselines. Section \ref{real:data:sec} reports the results of the real data analysis in cancer genomics. Section \ref{sec:conclusion} concludes with a summary of the main findings. Section \ref{sec:append} provides some supplementary information, including proofs and a table. 

For clarity, we summarize the notation used throughout the paper. For $\ba = (a_1, \ldots, a_p)^\top \in \mathbb{R}^p$, let $|\ba|_q = \left(\sum_{i=1}^p |a_i|^q\right)^{1/q}$ for $q \in [1, \infty)$. Denote by $\lambda_{\max}(\cdot)$ and $\lambda_{\min}(\cdot)$ the largest and smallest eigenvalues of a matrix, respectively. Boldface letters indicate vectors. For a set $A$, $\mathbbm{1}_A(\cdot)$ denotes the indicator function, and $|A|$ its cardinality. For constants $a$ and $b$, $a\vee b$ denotes $\max\{a,b\}$ and $a\wedge b$ denotes $\min\{a,b\}$.}

\section{Feature Screening and FDR Control for Regional Quantile Regression} \label{sec:main}

We consider the following regional quantile regression form:
\begin{align}
    Q_{y|\bx}(\tau) = \alpha^*(\tau) + \bx^T\bbeta^*(\tau), \quad \tau \in \Delta, \label{eq:quantile model 1}
\end{align}
where $Q_{y|\bx}(\tau)=\inf \{t:\mathbb{P}(y\leq t | \bx)\geq \tau\}$ denotes the $\tau$th conditional quantile of a response variable $y\in \mathbb{R}$ given the covariates $\bx=(x_1,\ldots, x_p)\in \mathbb{R}^p$, and $\Delta = [\Delta_l, \Delta_u] \subseteq (0,1)$ is an interval of quantile levels of interest. The parameters $\alpha^*(\tau)$ and $\bbeta^*(\tau)=(\beta_1^*(\tau),\ldots,\beta_p^*(\tau))$ denote the intercept and coefficient functions, respectively, which are allowed to vary with the quantile level $\tau\in \Delta$. The set of active covariates is defined as
\begin{align}
\label{true:active}
    \mathcal{M}(\Delta) = \left\{ 1\leq j \leq p: \beta^*_j (\tau) \neq 0 \text{ for some } \tau \in \Delta \right\}.
\end{align}
According to the definition, an active variable may influence all or some quantiles in the region $\Delta$ of interest. Given $n$ independent and identically distributed observations, denoted by $\{(\bx_i,y_i)\}_{i=1}^n$, our goal is to identify the set of active covariates $\mathcal{M}(\Delta)$, i.e., the variables which impact the prespecified part (captured by $\Delta$) of the conditional distribution of the response. We adopt a common global sparsity assumption that the number of active variables is smaller than the sample size \citep{zheng2015globally, yoshida2021}, that is, $|\mathcal{M}(\Delta)|\leq n$.

We are concerned with the ultrahigh-dimensional setting where the covariate dimension $p$ can be exponentially large in relative to the sample size $n$. This is a typical scenario in genomics studies, where the number of assayed molecular features (e.g., SNPs, CpG sites, gene transcripts, metabolites) routinely reaches tens of thousands to millions while available samples are often in the hundreds. We develop a screen and clean procedure, where the first stage performs feature screening to reduce dimensionality from high to moderate scale and the second stage employs a knockoff filter to control false discovery rate for variable selection. We discuss in detail the two stages in Sections \ref{c3:subsec:2.1} and \ref{c3:subsec:2.4}, respectively. 

\subsection{Feature Screening via Integrated Regional Quantile Estimation}\label{c3:subsec:2.1}

This section introduces our proposed screening method for ultrahigh-dimensional data. Without loss of generality, we assume each covariate is standardized such that $\mathbb{E}x_{ij}=0, {\rm Var}(x_{ij})=1, 1\leq j \leq p$, where $x_{ij}$ is the $i$th observation for the $j$th covariate. Our method ranks the importance of each covariate by an integrated marginal quantile statistics. Specifically, let $\Delta_l=\xi_0<\xi_1<\cdots<\xi_{L_0-1}<\xi_{L_0}=\Delta_u$ be a partion of the interval $\Delta$, and $\mathcal{S}(k,\bfm{\xi})$ be the space of polynomial splines of order $k+1$ with knots $\{\xi_i\}_{i=1}^{L_0-1}$. There exists a normalized B-spline basis for $\mathcal{S}(k,\bfm{\xi})$ \citep{schumaker2007spline}, denoted by $\bB(\tau)=(B_1(\tau),\ldots,B_{N}(\tau))$ with $N=k+L_0$. For each $j=1,\ldots, p$, solve the following marginal quantile regression problem
\begin{align*}
    (\hat{{\ba}}_j,\hat{{\bb}}_j)\in \argmin_{{\ba},{\bb}}\sum_{\ell=1}^L\sum_{i=1}^n\rho_{\tau_{\ell}}\Big(y_i-x_{ij}\mathbf{B}(\tau_{\ell})^T{\bb}-\mathbf{B}(\tau_{\ell})^T{\ba}\Big), 
\end{align*}
where $\rho_{\tau}(u) = u(\tau-I(u\leq0))$ is the check loss function, and $\Delta_l<\tau_1<\cdots<\tau_{L}<\Delta_u$ are uniformly spaced quantile levels over the interval $\Delta$. The importance of each covariate is then measured by the integrated squared coefficient estimator,
\[
\hat{R}_j=\frac{1}{L}\sum_{\ell=1}^L (\bB(\tau_{\ell})^T\hat{\bb}_j)^2, \quad j=1,\ldots, p.
\]
Our screening step retains the subset of variables
\begin{align*}
    \widehat{\mathcal{M}}(\Delta) = \Big\{1\leq j\leq p: \hat{R}_j \geq \nu_0\Big\},
\end{align*}
where $\nu_0 > 0$ is a pre-specified cutoff threshold. 

The method is based on the magnitude of the estimated marginal regression coefficients. Utilizing marginal quantile regression statistics for screening is not new \citep{he2013quantile, ma2016asymptotic, xu2017model}. Our method differs from the existing ones by the smoothing (via B-splines) and aggregating (across quantile levels) components. In the context of regional quantile regression, the marginal regression coefficients tend to change smoothly with the quantile level $\tau$. Hence, our spline-based smoothing approach can lead to better estimators than the naive approach that estimates coefficients independently. Moreover, for an active variable, its marginal regression coefficient might be small for some quantiles. However, as long as the aggregated magnitude across quantiles is large enough, our method is able to detect it. We formalize these arguments in a theorem that establishes the sure screening property \citep{fan2008sis} -- all the active variables survive after variable screening with probability tending to one. Towards that end, for each $j \in \mathcal{M}(\Delta)$, define the population quantities:
\begin{align}
({\ba}_j^*,{\bb}_j^*)  &\in \argmin_{{\ba},{\bb}\in \mathbb{R}^N}\frac{1}{L}\sum_{\ell=1}^L\mathbb{E}\rho_{\tau_{\ell}}\Big(y-x_j\mathbf{B}(\tau_{\ell})^T{\bb}-\mathbf{B}(\tau_{\ell})^T{\ba}\Big), \label{star:def}\\
 (f_j(\tau),g_j(\tau)) &\in \argmin_{f,g\in \mathbb{R}}\mathbb{E}\rho_{\tau}(y-x_jf-g). \label{c3:pop:gf}
\end{align}

\noindent We impose the following regularity conditions.

\begin{condition}[\textbf{Bounded covariates}] \label{con:1}
The covariates are bounded: $\max_{j\in \mathcal{M}(\Delta)}|x_j|\leq K_x$.
\end{condition}

\begin{condition}[\textbf{Smoothness condition}] \label{con:2}
The functions $\{f_j(\tau),g_j(\tau)\}_{j\in \mathcal{M}(\Delta)}$ belong to the class of functions whose $k$th derivative satisfies a Lipschitz condition of order $c$: $|h^{(k)}(s)-h^{(k)}(t)|\leq c_0|s-t|^c$, where $k$ is a nonnegative integer and $c \in (0,1]$ satisfies $d=k+c>0.5$. 
\end{condition}

\begin{condition}[\textbf{Bounded density and its derivative}] \label{con:3}
For each $j\in \mathcal{M}(\Delta)$, the conditional density evaluated at $t=x_jf_j(\tau)+g_j(\tau)$ and $x_j \mathbf{B}(\tau)^T\bb_j^*+\mathbf{B}(\tau)^T\ba_j^*$ is bounded uniformly in $(x_j,\tau)$: $0<\underline{f}\leq p_{y|x_j}(t)\leq \bar{f}<\infty$. Moreover, its derivative is uniformly bounded: $\sup_t|p'_{y|x_j}(t)|\leq \bar{f}'$. 
\end{condition}

\begin{condition}[\textbf{Knot allocation}] \label{con:4}
$\frac{\max_{1\leq i\leq L_0}(\xi_i-\xi_{i-1})}{\min_{1\leq i\leq L_0}(\xi_i-\xi_{i-1})} \leq c_1$ for some constant $c_1\geq 1$. 
\end{condition}

\begin{condition}[\textbf{Strong integrated marginal signal}] \label{con:5}
$\min_{j\in \mathcal{M}(\Delta)}\frac{1}{L}\sum_{\ell=1}^L|f_j(\tau_{\ell})| \geq \kappa n^{-\gamma}$ for some constants $\kappa, \gamma>0$.
\end{condition}

\begin{condition}[\textbf{Scaling condition}] \label{con:6}
$Nn^{2\gamma-1}\log n=o(1), N^{-d}n^{\gamma}=o(1), Nn^{-1/2}\sqrt{\log n}=o(1), N=o(L)$. 
\end{condition}

Conditions \ref{con:1}-\ref{con:3} are standard in the literature on quantile regression and nonparametric spline approximation. \citep{belloni2011l1, he2013quantile, zheng2015globally, sherwood2016partially, yoshida2021}. Condition \ref{con:4} requires the spacing between knots to be of the same order, aiming to control the eigenvalues of the key B-spline related matrix $\sum_{\ell=1}^L\bB(\tau_{\ell})\bB(\tau_{\ell})^T$ \citep{shen1998local}. Condition \ref{con:5} is a common marginal signal condition in the quantile screening literature \citep{he2013quantile, ma2016asymptotic, xu2017model}. As noted earlier, our screening method integrates marginal signals across different quantiles, thus only requiring the integrated signal to be strong. Finally, Condition \ref{con:6} specifies the allowable growth rate of the number of B-spline basis functions and integrated quantiles relative to the sample size. For instance, choose $N=n^{c_N}, L=n^{c_L}$, then Condition \ref{con:6} is satisfied when $\gamma/d<c_N<\min(1-2\gamma, 1/2)$ and $c_L>c_N$. We also see that the smaller $\gamma$ (stronger marginal signal) and larger $d$ (smoother marginal coefficient) are, the weaker Condition \ref{con:6} is. Under these conditions, we establish the following sure screening property. 

\begin{theorem}
\label{c3:thm:one}
Under Conditions \ref{con:1}–\ref{con:6}, if we choose the threshold $\nu_0=\frac{\kappa^2}{5}n^{-2\gamma}$, then the sure screening property holds:
\[
\mathbb{P}\big(\mathcal{M}(\Delta)\subseteq \widehat{\mathcal{M}}(\Delta)\big)\rightarrow 1, \quad \text{ as  }~~  n\rightarrow \infty.
\]
\end{theorem}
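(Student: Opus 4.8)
The plan is to compare the empirical screening statistic $\hat{R}_j$ with its population counterpart $R_j^\ast := \frac{1}{L}\sum_{\ell=1}^L(\bB(\tau_\ell)^\top \bb_j^\ast)^2$, and to show, uniformly over the active set, that (i) $R_j^\ast$ is bounded below by roughly $\tfrac{\kappa^2}{4}n^{-2\gamma}$, and (ii) $|\hat{R}_j - R_j^\ast| = o_P(n^{-2\gamma})$; together these force $\hat{R}_j \ge \tfrac{\kappa^2}{5}n^{-2\gamma}=\nu_0$ for every $j\in\mathcal{M}(\Delta)$ with probability tending to one. A key simplification is that the sure screening property only requires the active variables to survive, and by the global sparsity assumption $|\mathcal{M}(\Delta)|\le n$; hence all uniform statements need range only over at most $n$ indices, so union bounds cost a factor $\log n$ rather than $\log p$, which is exactly what the $\log n$ factors in Condition \ref{con:6} are designed to absorb.

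For the population lower bound (step (i)), I would first argue that the integrated-loss minimizer $(\ba_j^\ast,\bb_j^\ast)$ in \eqref{star:def} reproduces the pointwise marginal slope $f_j$ from \eqref{c3:pop:gf}, i.e. $\|\bB^\top\bb_j^\ast - f_j\|_{L}\lesssim N^{-d}$ in the discrete norm $\|h\|_L^2=\frac1L\sum_\ell h(\tau_\ell)^2$. This follows by comparing $M_j$ at $(\ba_j^\ast,\bb_j^\ast)$ with its value at the best spline approximant of $(f_j,g_j)$, whose excess risk is $O(N^{-2d})$ by Condition \ref{con:2} and standard spline approximation theory; local strong convexity of $M_j$ in the prediction norm (from the density lower bound in Condition \ref{con:3}), together with the standardization $\mathbb{E}x_j=0,\ \mathrm{Var}(x_j)=1$ that decouples the slope and intercept blocks, then converts this excess risk into the stated $L$-norm bound. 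Combining with Condition \ref{con:5} and $N^{-d}n^{\gamma}=o(1)$ gives $\frac1L\sum_\ell|\bB^\top\bb_j^\ast| \ge \kappa n^{-\gamma}-O(N^{-d})\ge \tfrac{\kappa}{2}n^{-\gamma}$, and Jensen's inequality yields $R_j^\ast\ge\big(\frac1L\sum_\ell|\bB^\top\bb_j^\ast|\big)^2\ge\tfrac{\kappa^2}{4}n^{-2\gamma}$.

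The main obstacle is the estimation step (ii): controlling $\max_{j\in\mathcal{M}(\Delta)}\|\bB^\top(\hat\bb_j-\bb_j^\ast)\|_L$ for the non-smooth check loss, which forces an empirical-process rather than a Taylor treatment. I would use the standard convex M-estimation route via the basic inequality $\hat M_j(\hat\ba_j,\hat\bb_j)\le \hat M_j(\ba_j^\ast,\bb_j^\ast)$, which bounds the excess risk by the centered empirical process $(\hat M_j-M_j)(\hat\ba_j,\hat\bb_j)-(\hat M_j-M_j)(\ba_j^\ast,\bb_j^\ast)$. The population objective is locally strongly convex with curvature of order $\underline f$ in the prediction norm (Condition \ref{con:3}), while the Gram matrix $\frac1L\sum_\ell\bB(\tau_\ell)\bB(\tau_\ell)^\top$ has eigenvalues of order $1/N$ under Condition \ref{con:4} and $N=o(L)$. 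Since the check loss is $1$-Lipschitz and the covariates are bounded (Condition \ref{con:1}), the local fluctuation of the centered empirical process over a prediction ball of radius $r$ in this $N$-dimensional spline model is of order $r\sqrt{N\log n/n}$, the $\log n$ arising from a Bernstein union bound over the at most $n$ active indices; balancing this against the quadratic lower bound $\underline f\,r^2$ yields $\|\bB^\top(\hat\bb_j-\bb_j^\ast)\|_L^2=O_P(N\log n/n)$. The sup-norm conversion $\|\cdot\|_\infty\lesssim\sqrt N\,\|\cdot\|_L$ together with $Nn^{-1/2}\sqrt{\log n}=o(1)$ then guarantees $\|\bB^\top(\hat\bb_j-\bb_j^\ast)\|_\infty=o_P(1)$, keeping the estimator in the region where the quadratic approximation (whose remainder is governed by $\bar f'$ in Condition \ref{con:3}) is valid. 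Finally, Cauchy--Schwarz gives $|\hat{R}_j-R_j^\ast|\lesssim \|\bB^\top(\hat\bb_j-\bb_j^\ast)\|_L\big(\|\bB^\top\bb_j^\ast\|_L+\|\bB^\top(\hat\bb_j-\bb_j^\ast)\|_L\big)$, so the scaling condition $Nn^{2\gamma-1}\log n=o(1)$ delivers exactly $\|\bB^\top(\hat\bb_j-\bb_j^\ast)\|_L=o_P(n^{-\gamma})$ and hence $|\hat{R}_j-R_j^\ast|=o_P(n^{-2\gamma})$.

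On the event where both (i) and (ii) hold uniformly over $\mathcal{M}(\Delta)$, one has $\hat{R}_j\ge R_j^\ast-|\hat{R}_j-R_j^\ast|\ge \tfrac{\kappa^2}{4}n^{-2\gamma}-o(n^{-2\gamma})\ge \tfrac{\kappa^2}{5}n^{-2\gamma}=\nu_0$ for all $j\in\mathcal{M}(\Delta)$ and all large $n$, so $\mathcal{M}(\Delta)\subseteq\widehat{\mathcal{M}}(\Delta)$ and the probability of this event tends to one, proving the claim. I expect the delicate point throughout to be keeping every constant and rate uniform in $j$ across the active set while the loss is only piecewise linear, and in particular verifying that the localized empirical-process bound and the strong-convexity constant can be taken uniform over $\mathcal{M}(\Delta)$ under Conditions \ref{con:1}--\ref{con:4}.
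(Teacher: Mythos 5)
Your overall architecture coincides with the paper's: a population lower bound on the spline-projected marginal signal (spline approximation with $O(N^{-2d})$ excess risk, a self-calibrated strong-convexity bound for the check loss exploiting the standardization $\mathbb{E}x_j=0$, $\mathrm{Var}(x_j)=1$, then Condition \ref{con:5} with $N^{-d}n^{\gamma}=o(1)$), plus a uniform estimation-error bound obtained from the Hjort--Pollard convexity/localized empirical-process argument with a union bound over the at most $n$ active indices, yielding exactly the prediction-norm rate $N\log n/n$ that Condition \ref{con:6} is designed to absorb. These are precisely the paper's Lemmas \ref{main:result:2} and \ref{main:result1} (with Lemma \ref{useful:1} supplying the $1/N$-order Gram eigenvalues you invoke), so up to working in the prediction norm rather than the coefficient Euclidean norm, the route is the same.

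The one step that fails as written is the final comparison. From
$|\hat{R}_j-R_j^\ast|\le \|\bB^\top(\hat\bb_j-\bb_j^\ast)\|_L\bigl(\|\bB^\top(\hat\bb_j-\bb_j^\ast)\|_L+2\|\bB^\top\bb_j^\ast\|_L\bigr)$
you conclude $|\hat{R}_j-R_j^\ast|=o_P(n^{-2\gamma})$, but Condition \ref{con:5} only lower-bounds the signal; $\|\bB^\top\bb_j^\ast\|_L$ can be of constant order, in which case your estimate gives only $o_P(n^{-\gamma})\cdot O(1)=o_P(n^{-\gamma})$, which is far larger than $n^{-2\gamma}$. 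So claim (ii) is false in general and the displayed chain $\hat{R}_j\ge R_j^\ast-|\hat{R}_j-R_j^\ast|\ge \tfrac{\kappa^2}{4}n^{-2\gamma}-o(n^{-2\gamma})$ does not follow. The conclusion is nonetheless recoverable from the bounds you already have: by the triangle inequality in the discrete norm, $\|\bB^\top\hat\bb_j\|_L\ge \|\bB^\top\bb_j^\ast\|_L-\|\bB^\top(\hat\bb_j-\bb_j^\ast)\|_L\ge \tfrac{\kappa}{2}n^{-\gamma}-o_P(n^{-\gamma})$, and squaring gives $\hat{R}_j\ge \tfrac{\kappa^2}{5}n^{-2\gamma}$ for large $n$. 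The paper achieves the same effect with the elementary inequality $a^2\ge b^2/2-(a-b)^2$, which keeps the error term quadratic in the prediction-norm discrepancy rather than linear. You should replace the $|\hat{R}_j-R_j^\ast|$ detour with one of these devices; everything else stands.
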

The proof of Theorem \ref{c3:thm:one} is provided in Section \ref{c3:proof:thm1}. This theorem guarantees that, with a sufficiently large sample size, our procedure retains all active variables (i.e. no false negative error) with a probability approaching one. However, the selected set $\widehat{\mathcal{M}}(\Delta)$ may also include some inactive variables. The second stage is therefore to control the false positive error among the selected variables.

\subsection{FDR Control for the Regional Quantile Regression}\label{c3:subsec:2.4}

In the second stage, we focus on controlling the false discovery rate (FDR), defined as 
\begin{align}
\label{fdr:regional}
{\rm FDR}_{\Delta}=\mathbb{E}\bigg[\frac{|\{j: j\in \hat{S}_{\Delta} {\rm~and~} j\notin \mathcal{M}(\Delta)\}|}{|\hat{S}_{\Delta}|\vee 1}\bigg],
\end{align}
for a given variable selection outcome $\hat{S}_{\Delta}$ where $\hat{S}_{\Delta} \subset \{1,\dots,p\}$ is a subset of indexes for quantile region ${\Delta}$. We omit ${\Delta}$ if it is obvious afterward. FDR is the expected proportion of falsely selected variables. For instance, {with FDR controlled at $q = 0.1$, the expected proportion of false discoveries among the selected variables is at most 10 percent.}
% not correct because Controlling the FDR at q = 0.1 means that, "on average", no more than 10 percent of the selected variables are expected to be false discoveries.: Controlling FDR at a level, say $q=0.1$, implies that if we selected $20$ variables, then at most two of them were expected to be inactive variables. 
The notion of FDR was proposed by \cite{benjamini1995controlling} and has been extensively studied in multiple testing \citep{benjamini2010discovering,efron2012large,efron2021computer}. For the problem of variable selection, earlier works can only guarantee FDR control under very strong assumptions \citep{miller2002subset,liu2010stability,meinshausen2010stability,bogdan2015slope,g2016sequential}. Until recently, a knockoff framework has been established to provide general FDR control guarantees for Gaussian linear models \citep{barber2015fdr} and then for arbitrary conditional distribution of the response \citep{candes2018modelx}. However, even the knockoff framework cannot be directly applied to our problem. In the rest of the section, we give some necessary background for the knockoff framework, explain the issue, and provide our solution.

\subsubsection{A Brief Review of the Model-X Knockoff Filter}\label{c3:subsec:2.2}

The model-X knockoff framework, introduced by \cite{candes2018modelx}, is general and flexible, as it works under high-dimensional settings and operates without making modeling assumptions about the conditional distribution of the response variable. In such (conditional) distribution-free settings, it is important to first clarify the meaning of active/inactive variables. 

\begin{definition}
\label{knock:active:def}
Given a response variable $y$ and covariates $\bx=(x_1,\ldots, x_p)$\footnote{Note that the number of variables left after the screening stage is much smaller than $p$. With a slight abuse of notation, we still use $p$ to denote the number of variables. Also, conditioning on the sure screening, the remaining variables still follow the regional quantile model.}, a variable $x_j$ is an inactive variable if and only if $y$ is independent of $x_j$ conditionally on the other variables $\{x_1,\ldots, x_p\}\setminus \{x_j\}$. The subset of inactive variables is denoted by $\mathcal{H}_0\subseteq \{1,2,\ldots, p\}$. A variable $x_j$ is active if and only if $j\notin \mathcal{H}_0$.
\end{definition}

%Recently, various works related to the knockoff method have been proposed. For our case, we typically focus on the model-X approach, which allows us to further extend the knockoff filter to high-dimensional models. For the detailed explanations of the model-X approach and the knockoff methods, please see \cite{candes2018modelx} and references therein. We introduce one definition from \cite{candes2018modelx}.

The key idea of the model-X knockoff framework is to construct knockoff variables as controls to tease apart active variables from inactive ones. We say that a newly constructed random vector $\tilde{\bx}\in \mathbb{R}^p$ is the model-X knockoffs for the orignal variables $\bx \in \mathbb{R}^p$ if it has the two properties as follows:
\begin{enumerate}
    \item Pairwise exchangeability: $(\bx, \tilde{\bx}) \overset{d}{=} (\bx, \tilde{\bx})_{swap(S)}, \quad \forall S\subseteq \{1,2,\ldots, p\}$,
    \item Conditional independence: $\tilde{\bx} \indep \by|\bx$,
\end{enumerate}
where $swap(S)$ means swapping the entries $x_j$ and $\tilde{x}_j$ for each $j\in S$. Given i.i.d. observations $\{(\bx_i,y_i)\}_{i=1}^n$, a knockoff $\tilde{\bx}_i$ is constructed independently for $\bx_i$, $i=1,2,\ldots, n$. The augmented data $\{(\bx_i,\tilde{\bx}_i,y_i)\}_{i=1}^n$ is then used to compute importance statistics $W_j$ for each $j=1,\ldots, p$, and variables whose $W_j$ pass certain threshold are eventually selected. We refer to the original paper for a comprehensive treatment of the model-X knockoff framework. We will provide more details when it comes to our proposed method.

%\subsection{Problem with the model-X knockoff for the regional quantile regression}
%For a single observation case, we say that a variable $x_j$ is null if $\beta_j = 0$ for $j \in \{1,\dots, p\}$, this is satisfied when we assume Proposition 2.2 of \cite{candes2018modelx}. That is
%\begin{align*}
%    x_j \text{ is null} \Leftrightarrow \beta_j = 0 \Leftrightarrow x_j \text{ is conditionally independent of }y.
%\end{align*}
%For quantile regression, this is also true for the single-level quantile regression, since
%\begin{align*}
%    \beta_j(\tau) = 0, \forall \tau \in (0,1) \Leftrightarrow x_j \text{ is conditionally independent of }y.
%\end{align*}
%Therefore, we can directly adapt the original knockoff method to this case using the original $y$. However, the method cannot be directly adapted to the regional case, since the importance within a region does not necessarily imply conditional independence. This means that, for example,
%\begin{align*}
%    \beta_j(\tau) = 0, \forall \tau \in \Delta \notleftright x_j \text{ is conditionally independent of }y.
%\end{align*}
%Specifically, the left-hand side does not imply the righthand side of the above equation.
\subsubsection{Model-X Knockoff Filter for the Regional Quantile Regression}\label{c3:subsec:2.3}

Since the model-X knockoff framework does not require any assumption on the conditional distribution of the response, it can be applied to the regional quantile regression model as well. However, a critical issue arises as the set of active variables $\mathcal{M}(\Delta)$ (see \eqref{true:active}) in the context of regional quantile regression is not necessarily the same as the one (see Definition \ref{knock:active:def}) in the knockoff framework, as shown in the following proposition.

\begin{proposition}
\label{key:issue:raise}
Under the regional quantile regression model \eqref{eq:quantile model 1}, if $x_j$ is independent of $y$ conditionally on $\{x_1,\ldots,x_p\}\setminus \{x_j\}$, then $\beta^*_j(\tau)=0, \forall \tau \in \Delta$; but not vice versa.

\end{proposition}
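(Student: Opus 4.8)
The plan is to treat the two halves of the statement separately: first the asserted implication, then the failure of its converse via an explicit construction.

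For the forward direction, suppose $x_j\indep y\mid\bx_{-j}$ with $\bx_{-j}=\{x_1,\dots,x_p\}\setminus\{x_j\}$. By definition of conditional independence, the conditional law of $y$ given $\bx$ coincides with its conditional law given $\bx_{-j}$ alone, so the conditional quantile function is invariant to $x_j$: for any two values $x_j,x_j'$ in the conditional support and every $\tau$, $Q_{y|x_j,\bx_{-j}}(\tau)=Q_{y|x_j',\bx_{-j}}(\tau)$. Substituting the linear form of model \eqref{eq:quantile model 1} on both sides, the terms $\alpha^*(\tau)$ and $\bx_{-j}^T\bbeta^*_{-j}(\tau)$ cancel, leaving $(x_j-x_j')\beta_j^*(\tau)=0$. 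Because $x_j$ is non-degenerate after standardization (and, absent perfect collinearity, remains so conditionally on $\bx_{-j}$), we may pick $x_j\neq x_j'$, forcing $\beta_j^*(\tau)=0$; since $\tau\in\Delta$ was arbitrary, this holds throughout $\Delta$.

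For the converse, I would construct a covariate whose regional coefficients vanish yet which still drives $y$ through quantiles lying outside $\Delta$. A minimal example uses a single bounded, non-degenerate $x$ with $Q_{y|x}(\tau)=g(\tau)+x\,h(\tau)$, where $g$ is a strictly increasing baseline quantile function and $h(\tau)\equiv0$ on $\Delta=[\Delta_l,\Delta_u]$ but $h(\tau)\neq0$ for some $\tau$ in $(0,\Delta_l)\cup(\Delta_u,1)$. By construction $\beta^*(\tau)=h(\tau)=0$ for all $\tau\in\Delta$, so $x\notin\mathcal{M}(\Delta)$; yet the tail quantiles of $y$ move with $x$, so the conditional distribution of $y$ genuinely depends on $x$ and hence $x\notin\mathcal{H}_0$. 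This breaks the implication $\beta_j^*(\tau)\equiv0$ on $\Delta\Rightarrow j\in\mathcal{H}_0$.

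The two implications themselves are immediate once set up; the only point demanding care is that the counterexample define a legitimate conditional distribution, i.e. that $\tau\mapsto Q_{y|x}(\tau)$ be strictly increasing for every admissible value of $x$. I expect this to be the main, if mild, obstacle, and I would resolve it by exploiting the boundedness of $x$ (as in Condition \ref{con:1}) and taking the tail perturbation $h$ small enough relative to the slope of $g$, so that the monotone baseline dominates the $x$-dependent term uniformly over the support of $x$. Conceptually, this proposition is exactly what motivates the later winsorization step: it certifies that the knockoff activity set $\mathcal{H}_0$ can be strictly finer than the regional set $\mathcal{M}(\Delta)$, so a naive model-X filter might flag variables whose influence lies entirely outside the region of interest $\Delta$.
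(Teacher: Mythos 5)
Your proof is correct and follows essentially the same route as the paper's, which simply observes that conditional independence forces the quantiles in $\Delta$ to be invariant to $x_j$ (hence $\beta_j^*(\tau)=0$ there) while a variable can still act on quantiles outside $\Delta$. Your version is more explicit than the paper's two-sentence argument --- in particular the cancellation step for the forward direction and the concrete non-crossing counterexample for the converse, where the paper only asserts that $x_j$ ``may still impact'' quantiles outside $\Delta$ --- but these are elaborations of the same idea rather than a different approach.
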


\begin{proof}
If $x_j$ is independent of $y$ conditionally on $\{x_1,\ldots,x_p\}\setminus \{x_j\}$, $x_j$ does not influence the conditional quantiles in $\Delta$, hence $\beta^*_j(\tau)=0, \forall \tau \in \Delta$. On the other hand, even if $\beta^*_j(\tau)=0, \forall \tau \in \Delta$, $x_j$ may still impact some conditional quantiles outside $\Delta$, so that $x_j$ is not conditionally independent of $y$.
\end{proof}

Note that the FDR in the knockoff framework is defined as \eqref{fdr:regional} with $\mathcal{M}(\Delta)$ replaced by $\mathcal{H}^c_0$. Proposition \ref{key:issue:raise} shows $\mathcal{H}_0 \subseteq \mathcal{M}(\Delta)^c $, implying that the FDR the model-X knockoff can control is in fact smaller than the FDR \eqref{fdr:regional} that we aim to control. As a result, if we were to directly apply the model-X knockoff framework, we would not necessarily be able to control FDR at a target level. The underlying issue is that the coefficients $\bbeta^*(\tau)$ do not encode any information about the segment of the conditional distribution outside $\Delta$. We employ a winsorizing to address the issue, as demonstrated by the proposition below.

\begin{proposition} \label{c3:prop:censor}
Let $\tilde{y} \in \mathbb{R}$ be constructed as follows:
\begin{align*}
    \tilde{y} = 
    \begin{cases}
        Q_{y|\textbf{x}} (\Delta_l) & \text{ if } y < Q_{y|\textbf{x}} (\Delta_l) \\
        y & \text{ if }  Q_{y|\textbf{x}} (\Delta_l) \leq y \leq Q_{y|\textbf{x}} (\Delta_u) \\
        Q_{y|\textbf{x}} (\Delta_u) & \text{ if } y > Q_{y|\textbf{x}} (\Delta_u)
    \end{cases}
\end{align*}
Under the regional quantile regression model \eqref{eq:quantile model 1}, it holds that
\begin{align*}
Q_{\tilde{y}|\bx}(\tau)=
\begin{cases}
        \alpha^*(\Delta_l)+\bx^T\bbeta^*(\Delta_l) & \text{ if } \tau \leq \Delta_{l} \\
        \alpha^*(\tau)+\bx^T\bbeta^*(\tau) & \text{ if } \Delta_l< \tau < \Delta_u \\
        \alpha^*(\Delta_u)+\bx^T\bbeta^*(\Delta_u) & \text{ if } \tau \geq \Delta_u
    \end{cases}
\end{align*}
As a result, the variable $x_j$ is conditionally independent of $\tilde{y}$ if and only if $\beta^*_j(\tau)=0, \forall\tau\in \Delta = [\Delta_l, \Delta_u]$. 
\end{proposition}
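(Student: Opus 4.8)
The plan is to exploit the fact that, conditional on $\bx$, the map producing $\tilde{y}$ is a monotone (nondecreasing) transformation of $y$, namely the winsorizing projection onto the interval $[a,b]$ with $a=Q_{y|\bx}(\Delta_l)$ and $b=Q_{y|\bx}(\Delta_u)$. Because quantiles interact cleanly with monotone transformations, I would first pin down the conditional distribution function of $\tilde{y}$ and then invert it. Writing $F_{y|\bx}$ for the conditional CDF of $y$, a three-region case analysis (according to whether the clamping is active below, inactive, or active above) gives $F_{\tilde y|\bx}(t)=0$ for $t<a$, $F_{\tilde y|\bx}(t)=F_{y|\bx}(t)$ for $a\le t<b$, and $F_{\tilde y|\bx}(t)=1$ for $t\ge b$. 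The only points requiring care are the two atoms created by winsorizing, at $a$ (mass $F_{y|\bx}(a)$) and at $b$ (mass $1-F_{y|\bx}(b^-)$); I would verify the identity $\{\tilde y\le t\}=\{y\le t\}$ on $a\le t<b$ to justify the middle piece.

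With the conditional CDF in hand, I would compute $Q_{\tilde y|\bx}(\tau)=\inf\{t:F_{\tilde y|\bx}(t)\ge\tau\}$ separately in the three regimes $\tau\le\Delta_l$, $\Delta_l<\tau<\Delta_u$, and $\tau\ge\Delta_u$, using the defining properties of the original conditional quantile: monotonicity in $\tau$, the relation $F_{y|\bx}(Q_{y|\bx}(\tau))\ge\tau$, and $F_{y|\bx}(t)<\tau$ for $t<Q_{y|\bx}(\tau)$. For $\tau\le\Delta_l$ the CDF jumps from $0$ to $F_{y|\bx}(a)\ge\Delta_l\ge\tau$ at $t=a$, forcing the infimum to be $a=Q_{y|\bx}(\Delta_l)$; for $\tau\ge\Delta_u$ the CDF stays below $\Delta_u\le\tau$ on $t<b$ and equals $1$ at $b$, forcing the infimum to be $b=Q_{y|\bx}(\Delta_u)$; and for interior $\tau$ one shows $Q_{y|\bx}(\tau)\in[a,b]$ and that the level set $\{t:F_{\tilde y|\bx}(t)\ge\tau\}$ coincides with $\{t:F_{y|\bx}(t)\ge\tau\}$, giving $Q_{\tilde y|\bx}(\tau)=Q_{y|\bx}(\tau)$. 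Substituting the model \eqref{eq:quantile model 1} for $Q_{y|\bx}$ at $\Delta_l$, $\tau$, and $\Delta_u$ then yields the stated piecewise expression.

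For the final equivalence, I would note that the conditional law of $\tilde y$ given $\bx$ is completely determined by its conditional quantile function, which by the formula just established depends on $\bx$ only through the collection $\{\alpha^*(\tau)+\bx^T\bbeta^*(\tau):\tau\in\Delta\}$ (the interior values, plus the two endpoints governing the tails). Hence $x_j\indep\tilde y$ conditionally on the remaining covariates holds if and only if none of these quantities varies with $x_j$. The direction $\beta^*_j(\tau)\equiv 0$ on $\Delta\Rightarrow$ independence is immediate, since then $\bx^T\bbeta^*(\tau)$ drops $x_j$ for every $\tau\in\Delta$. Conversely, independence forces $\alpha^*(\tau)+\bx^T\bbeta^*(\tau)$ to be constant in $x_j$ for every $\tau\in\Delta$; since each covariate is standardized and hence nondegenerate, this constancy can hold only if the coefficient on $x_j$ vanishes, i.e. $\beta^*_j(\tau)=0$ for all $\tau\in\Delta$ (the interior from the middle regime, the endpoints from the two tail regimes).

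The main obstacle I anticipate is the quantile inversion at the two boundary levels $\tau=\Delta_l,\Delta_u$: the winsorizing introduces atoms at $a$ and $b$, so one must argue carefully with strict versus nonstrict inequalities and the infimum definition to confirm that the endpoint formulas are exactly $Q_{y|\bx}(\Delta_l)$ and $Q_{y|\bx}(\Delta_u)$ rather than neighboring values, and that the interior formula meets these endpoints consistently. The remainder is bookkeeping once the conditional CDF is correctly decomposed.
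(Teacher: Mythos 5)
Your proposal is correct and follows essentially the same route as the paper's proof: condition on $\bx$, observe that $\mathbb{P}(\tilde{y}\le t\mid\bx)$ agrees with $\mathbb{P}(y\le t\mid\bx)$ on $[Q_{y|\bx}(\Delta_l),Q_{y|\bx}(\Delta_u))$ and equals $1$ at the upper cutoff, invert the resulting CDF in the three regimes of $\tau$, and deduce the independence equivalence from the one-to-one correspondence between conditional quantile functions and conditional distributions. Your treatment of the atoms at the two cutoffs and your explicit nondegeneracy argument for the ``only if'' direction are slightly more detailed than the paper's, but the underlying argument is the same.
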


%\begin{proof}
%We condition on $\bx$ throughout the proof, and drop the conditioning notation for simplicity. Note that the random variable $\tilde{y}$ is defined over $[Q_{y|\textbf{x}} (\Delta_l), Q_{y|\textbf{x}} (\Delta_u)]$, and $\mathbb{P}(\tilde{y}\leq t)=\mathbb{P}(y\leq t)$ for $t\in [Q_{y|\textbf{x}} (\Delta_l), Q_{y|\textbf{x}} (\Delta_u))$ and $\mathbb{P}(\tilde{y}\leq t)=1$ for $t=Q_{y|\textbf{x}} (\Delta_u)$. As a result, it is direct to verify that $Q_{\tilde{y}|\bx}(\tau)=Q_{y|\bx}(\Delta_l)$ for $\tau \leq \Delta_l$ and $Q_{\tilde{y}|\bx}(\tau)=Q_{y|\bx}(\Delta_u)$ for $\tau \geq \Delta_u$. Now for $\tau \in (\Delta_l, \Delta_u)$, if $Q_{y|\bx}(\tau)<Q_{y|\bx}(\Delta_u)$, then $Q_{\tilde{y}|\bx}(\tau)=Q_{y|\bx}(\tau)$ since $\mathbb{P}(\tilde{y}\leq t)=\mathbb{P}(y\leq t)$ near $t=Q_{y|\bx}(\tau)$. If $Q_{y|\bx}(\tau)=Q_{y|\bx}(\Delta_u)$, then we conclude $Q_{\tilde{y}|\bx}(\tau)=Q_{y|\bx}(\Delta_u)$. Otherwise, there exists $t^*<Q_{y|\bx}(\Delta_u)$ such that $\mathbb{P}(y\leq t^*)=\mathbb{P}(\tilde{y}\leq t^*)\geq \tau$ which contradicts with $Q_{y|\bx}(\tau)=Q_{y|\bx}(\Delta_u)>t^*$. The last statement about conditional independence follows because there is a one-to-one mapping between quantile and distribution functions. 
%\end{proof}

The proof of Proposition \ref{c3:prop:censor} can be found in the supplementary material (Section \ref{sec:append}). Proposition \ref{c3:prop:censor} provides a recipe to adapt the knockoff framework for regional quantile regression. Instead of using the original response variable $y$, we use the winsorized version $\tilde{y}$ together with the covariates $\bx$ to run the model-X knockoff procedure. As shown in Section \ref{c3:sec:sim}, the winsorizing operation not only helps to achieve desired FDR control, but also enhances power.

To make the idea practical, we will use part of the data to estimate the unknown conditional quantiles that determine the winsorizing cut-offs in Proposition \ref{c3:prop:censor}. Standard linear quantile regression estimators often exhibit the well-known ``quantile crossing" problem, and if crossing occurs, the winsorizing operation in Proposition \ref{c3:prop:censor} may not be well defined. To avoid this issue, we adopt a non-crossing strategy similar to the ones in \citet{bondell2010,liu2011simultaneous}, to compute the non-crossing quantile LASSO, %\textcolor{red}{(are the $\tau_l$'s used here the same as in the screening step?)}
\begin{align}
\label{non:cross:lasso}
\{\hat{\alpha}(\tau_l),\hat{\bbeta}(\tau_l)\} = \argmin_{\{\alpha(\tau_l),\bbeta(\tau_l)\}} \sum_{l=1}^L \sum_{i\in \mathcal{I}_1} \rho_{\tau_l}(y_i - \alpha(\tau_{l})-\bx_i^T\bbeta(\tau_l)) + \lambda\sum_{l=1}^L\|\bbeta(\tau_l)\|_1 \\
\text{ subject to } \alpha(\tau_{l})+\bx^T\bbeta(\tau_l) \geq \alpha(\tau_{l-1})+ \bx^T\bbeta(\tau_{l-1}) \text{ for } \bx \in D,~l=2,\dots,L, \nonumber
\end{align}
where $D \subseteq \mathbb{R}^{p}$ is a closed convex polytope, $\mathcal{I}_1\subseteq \{1,2,\ldots,n\}$, and $\Delta_l=\tau_1<\tau_2<\cdots <\tau_{L-1}<\tau_L =\Delta_u$.
This non-crossing quantile LASSO guarantees monotone conditional quantile curves across the multiple quantile levels $\{\tau_l\}_{l=1}^L$, ensuring $\widehat{Q}_{y|\textbf{x}}(\Delta_l)=\hat{\alpha}(\Delta_l)+\bx^T\hat{\bbeta}(\Delta_l)  \leq \widehat{Q}_{y|\textbf{x}} (\Delta_u)=\hat{\alpha}(\Delta_u)+\bx^T\hat{\bbeta}(\Delta_u)$.

After obtaining the estimated conditional quantiles, we use the other part of the data, indexed by $\mathcal{I}_2$, to calculate winsorized responses, generate knockoff variables and compute important statistics. Specifically, the winsorized response variables are: $\forall i\in \mathcal{I}_2$,
\begin{align}
\label{censor:response:form}
    \tilde{y}_i = 
    \begin{cases}
        \hat{Q}_{y|\textbf{x}_i} (\Delta_l) & \text{ if } y_i < \hat{Q}_{y|\textbf{x}_i} (\Delta_l) \\
        y_i & \text{ if }  \hat{Q}_{y|\textbf{x}_i} (\Delta_l) \leq y_i \leq \hat{Q}_{y|\textbf{x}_i} (\Delta_u) \\
        \hat{Q}_{y|\textbf{x}_i} (\Delta_u) & \text{ if } y_i > \hat{Q}_{y|\textbf{x}_i} (\Delta_u)
    \end{cases}
\end{align}

We follow \cite{candes2018modelx} to generate second-order model-X knockoﬀs, denoted by $\{\tilde{\bx}_i\}_{i\in \mathcal{I}_2}$. To quantify the evidence for each variable's importance, we compute the non-crossing quantile LASSO as in \eqref{non:cross:lasso}, but based on the augmented data $\{(\bx_i,\tilde{\bx}_i,\tilde{y}_i)\}_{i\in \mathcal{I}_2}$: 
\begin{align}
\label{importance:lasso:def}
&\{\hat{\alpha}(\tau_l),\hat{\bbeta}(\tau_l),\hat{\tilde{\bbeta}}(\tau_l)\}  \\
= &\argmin_{\{\alpha(\tau_l),\bbeta(\tau_l),\tilde{\bbeta}(\tau_l)\}} \sum_{l=1}^L \sum_{i\in \mathcal{I}_2} \rho_{\tau_l}(\tilde{y}_i - \alpha(\tau_{l})-\bx_i^T\bbeta(\tau_l)-\tilde{\bx}_i^T\tilde{\bbeta}(\tau_l)) + \lambda\sum_{l=1}^L(\|\bbeta(\tau_l)\|_1+\|\tilde{\bbeta}(\tau_l)\|_1)\nonumber \\
&\text{ subject to } \alpha(\tau_{l})+\bx^T\bbeta(\tau_l)+\tilde{\bx}_i^T\tilde{\bbeta}(\tau_l) \geq \alpha(\tau_{l-1})+ \bx^T\bbeta(\tau_{l-1})+\tilde{\bx}_i^T\tilde{\bbeta}(\tau_{l-1})\text{ for } (\bx,\tilde{\bx}) \in D,~l=2,\dots,L. \nonumber
\end{align}
Then, the important statistics $W_j$ is defined as a weighted difference of the estimated coefficient magnitudes across the quantile region %To find important variables, we estimate the statistics $W_j$ for $j=1,\dots,p$. The statistics $W_j$ compare the $j$th variable in $\textbf{X}$ and $\tilde{\textbf{X}}$, and find out which variable is important or not. We calculate $W_j$ as follows
\begin{align}
\label{importance:def}
    W_j = \sum_{l=1}^{L}  w_l\left\{ \left|\widehat{\beta}_{j}(\tau_l)\right| - \left|\widehat{\tilde{\beta}}_{j},(\tau_l)\right| \right\}, \quad j=1,\ldots, p,
\end{align}
where $w_l$'s are non-negative weights. The intuition behind this statistic is straightforward. A large and positive value of $W_j$ indicates that the original variable $x_j$ exhibits a substantially stronger estimated effect across the quantile region than its knockoff $\tilde{x}_j$, providing evidence that $x_j$ is a genuine signal. We use equal weights in $W_j$, when signal information across quantile levels is unknown a priori.

Given the importance statistics $\{W_j\}_{j=1}^p$, we select variables in the set
\begin{align}
\label{knockoff:select:set}
\hat{\mathcal{M}} = \{ j : W_j \geq \omega \},
\end{align}
with a data-dependent threshold.
\begin{align}
\label{data:dep:thr}
    \omega = \min \left\{ t >0: \frac{\# \left\{  j : W_j \leq -t \right\} }{\# \left\{ j : W_j \geq t  \right\}} \leq q \right\},
\end{align}
where $q$ is the target FDR level. If the above set is empty, $\omega$ is set to $\infty$. While this threshold only leads to the provable control of a modified FDR, we still use it in our method, as it often shows effective control of the usual FDR empirically \citep{barber2015fdr, candes2018modelx}.

%As a result, if we assume Proposition \ref{c3:prop:censor} and Lemma \ref{c3:lem:flip}, then by Theorem 3.4 in \cite{candes2018modelx}, we attain the FDR control on the quantile region $\Delta$.

\subsection{The StaRQR-K Algorithm: A Step-by-Step Implementation}\label{sec:algorithm}

\RestyleAlgo{ruled}
\SetKwComment{Comment}{/* }{ */}
\SetKw{TuningParam}{Predefined Parameters:}
%\SetKw{ConvErr}{Predefined Error Thereshold:}
\SetKw{QuantRegion}{Quantile region of interest:}
\begin{algorithm}[t] 
\small
\caption{Stabilized Regional Quantile Regression with Knockoff Filter (StaRQR-K)}\label{c3:alg:RQR_knockoff_stable}
\KwData{$\{(\bx_i,y_i)\}_{i=1}^n$}
%\TuningParam{$d, k, q, R, \eta$} \\
\QuantRegion{$\Delta = [\Delta_l, \Delta_u]$} 
\vspace{0.2cm}
\begin{enumerate}
    %\item[1$-$2.] \parbox[t]{\dimexpr\linewidth-2.5em}{Perform the initial data partitioning and variable screening (Steps 1 and 2 of Algorithm \ref{c3:alg:RQR_knockoff}) to obtain the screened index set $\hat{\mathcal{M}}$.}
    \item \parbox[t]{\dimexpr\linewidth-2.5em}{Randomly partition the dataset into three disjoint subsets: $\{1,2,\ldots, n\} =\cup_{i=0}^2 \mathcal{I}_i$, with corresponding sample sizes $\{n_i\}_{i=0}^2$.}
    \item \parbox[t]{\dimexpr\linewidth-2.5em}{Using the first subset $\mathcal{I}_0$, apply the screening step described in Section \ref{c3:subsec:2.1} to select the top $d_n$ variables. Only these selected variables will be used in the remaining steps.}
    \item[3.] {Repeat the following steps for $r=1,\dots,R$:}
    \begin{enumerate}
        \item \parbox[t]{\dimexpr\linewidth-2.5em}{Randomly partition the remaining data $\mathcal{I}_1 \cup \mathcal{I}_2$ into two new subsets, $\mathcal{I}^{(r)}_1$ and $\mathcal{I}^{(r)}_2$, with respective sample sizes $n_1$ and $n_2$.}
        \item \parbox[t]{\dimexpr\linewidth-2.5em}{On the subset $\mathcal{I}^{(r)}_1$, perform non-crossing quantile LASSO in \eqref{non:cross:lasso} to obtain $\widehat{Q}_{y|\textbf{x}}(\Delta_l)$ and $\widehat{Q}_{y|\textbf{x}} (\Delta_u)$.}
        \item \parbox[t]{\dimexpr\linewidth-2.5em}{Using the subset $\mathcal{I}^{(r)}_2$, construct the winsorizeed responses in \eqref{censor:response:form}, compute the importance statistics via \eqref{importance:lasso:def}-\eqref{importance:def}, and select variables by \eqref{knockoff:select:set}-\eqref{data:dep:thr}.}
        \item \parbox[t]{\dimexpr\linewidth-2.5em}{Store the resulting set of selected variables, $\hat{\mathcal{M}}^{(r)}_{FDR}$.}
    \end{enumerate}
    \item[4.] \parbox[t]{\dimexpr\linewidth-2.5em}{After $R$ iterations, calculate the selection probability for each variable:
    \begin{align*}
        \pi_j = \frac{1}{R}\sum_{r=1}^R \mathbbm{1}(j \in \hat{\mathcal{M}}^{(r)}_{FDR}),~j=1,\dots,p.
    \end{align*}}
    \item[5.] \parbox[t]{\dimexpr\linewidth-2.5em}{The final set of selected variables is given by $\hat{\mathcal{M}}_{FDR} = \{1\leq j \leq p: \pi_j \geq \eta\}$.}
\end{enumerate}
\end{algorithm}

A potential limitation of the knockoff filter we developed in Section \ref{c3:subsec:2.4}, however, stems from its reliance on a single generation of knockoff random variables and a single data splitting. This algorithmic randomness may lead to unstable results (since multiple runs of the method on the same dataset can yield different sets of selected variables), as well as a loss of statistical power (due to reduced sample size). Motivated by \cite{meinshausen2010stability} and \cite{ren2023derandomizing}, we aggregate the selection results over multiple repetitions, to improve stability and statistical power. We are now in the position to combine all the methodological components described so far into a coherent and complete procedure, termed Stabilized Regional Quantile Regression with Knockoff Filter (StaRQR-K). It is detailed in Algorithm \ref{c3:alg:RQR_knockoff_stable}.

\section{Simulation Studies}
\label{c3:sec:sim}

In this section, we present extensive simulations to empirically validate the performance of the proposed StaRQR-K procedure (see Algorithm \ref{c3:alg:RQR_knockoff_stable}). The performance of our methods will be benchmarked against several existing approaches to highlight the effectiveness of our methods. To be comprehensive, we evaluate the screening stage and FDR control stage of our procedure separately. After establishing the empirical validity of our approach in the simulations, we will demonstrate its practical utility on a challenging real-world dataset in Section \ref{real:data:sec}.

\begin{figure}[t]
    \centering
    \begin{subfigure}[t]{0.35\textwidth}
        \centering
        \includegraphics[width=\textwidth]{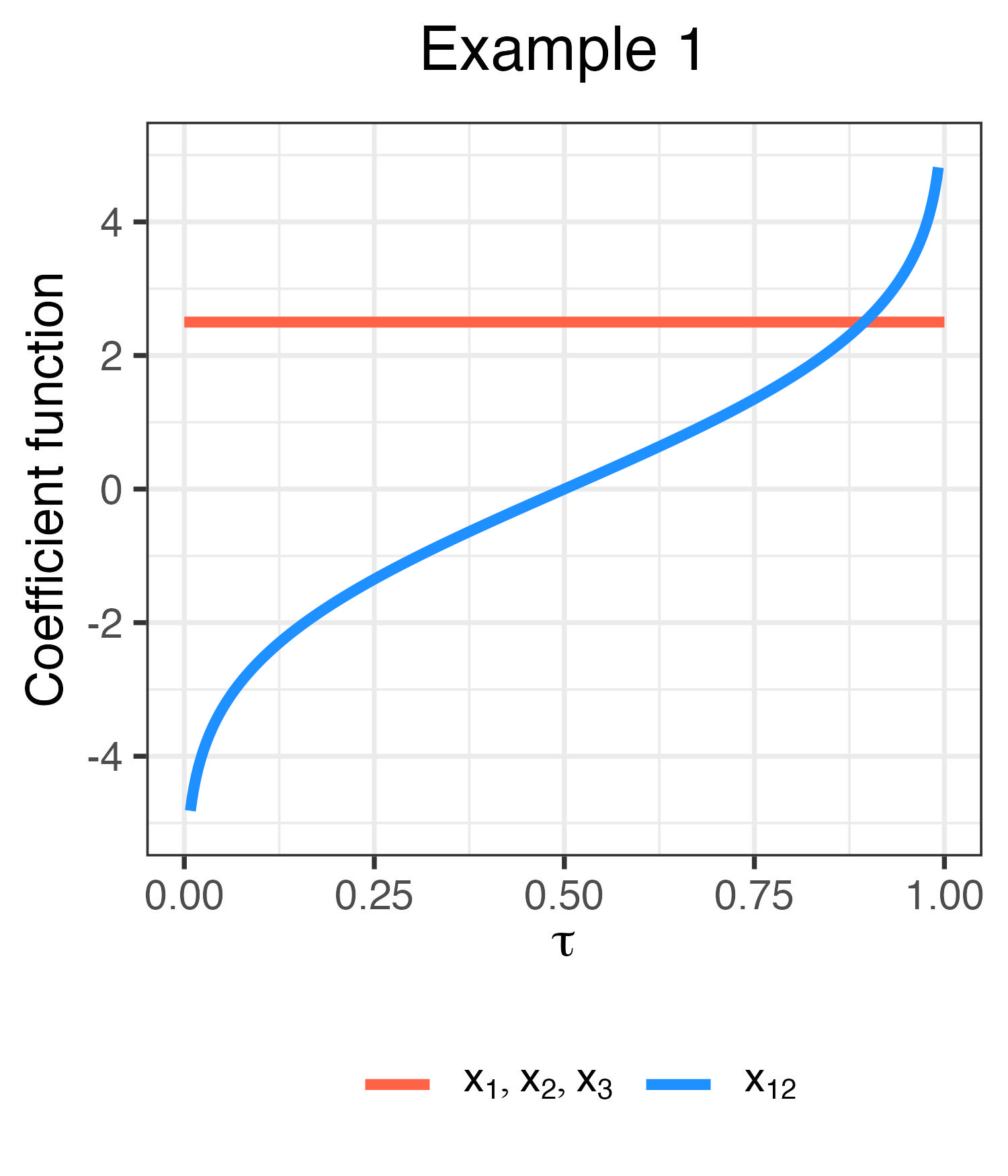}
        %\caption{Example 2.1}
    \end{subfigure}%
    ~ 
    \begin{subfigure}[t]{0.35\textwidth}
        \centering
        \includegraphics[width=\textwidth]{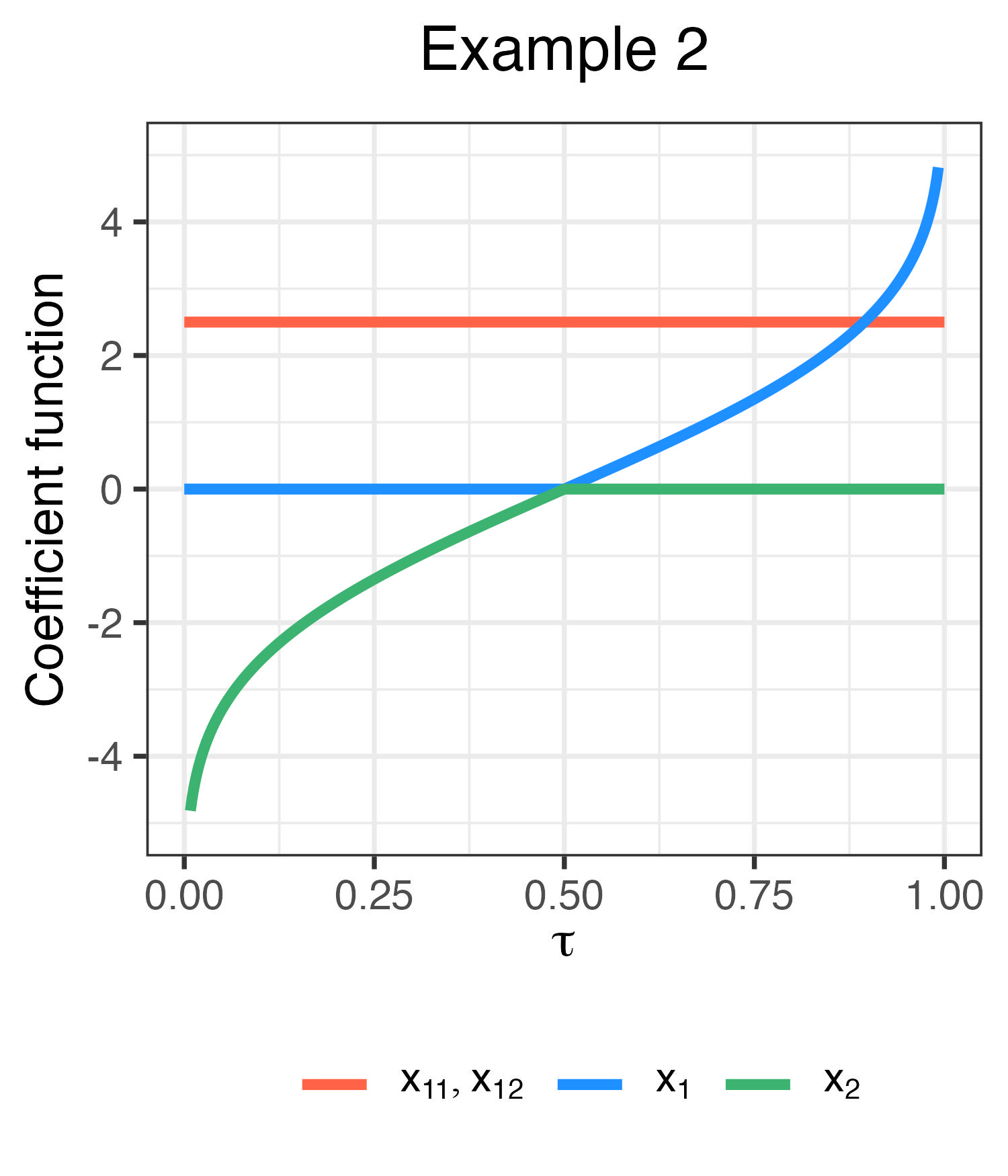}
        %\caption{Example 2.2}
        \label{fig2b:green}
    \end{subfigure}
    \caption{Coefficient functions for Examples 1 and 2. Red horizontal lines denote the homoskedastic coefficient functions, while blue and green curves denote the heteroskedastic coefficient functions.}
    \label{c3:fig2:hetero}
\end{figure}

\subsection{Screening Performance Comparison}

To evaluate the performance of our proposed feature screening method in Section \ref{c3:subsec:2.1}, we conduct simulation studies that compare it against several competing approaches across a variety of scenarios. We compare our method, regional quantile sure independence screening (rqSIS), with some popular model-free or quantile-model-based screening methods: DC-SIS \citep{li2012feature}, PC-Screen \citep{zhu2017projection}, qaSIS \citep{he2013quantile}, and QPCS \citep{ma2017qpcor}. For each model and setting, the procedure is repeated 500 times. Within each replication, we rank the features in descending order according to each screening method and record the minimum model size (MMS) required to contain all active features. To summarize these results, we report the 5\%, 35\%, 65\%, and 95\% quantiles of the MMS values across the 500 repetitions, from best to worst.

We fix $p=10,000, n=300$ and the quantile region of interest as $\Delta = [0.05, 0.95]$. For the methods, qaSIS and QPCS, since they are designed to deal with only one specific single quantile level, we set $\tau=0.25, 0.5, 0.75$ for them to compare the results. For the tuning parameters, we set $k=3, L_0=3, L=10$ in our method. For other methods, we use default or some tuned parameters based on the suggestions from the papers. %\textcolor{red}{(fill in with the update; also what parameter values we used for other methods? their default ones?)}
We consider two examples:
\begin{itemize}
\item Example 1. The intercept function $\alpha^*(\tau)=0$. Only $x_1, x_2, x_3, x_{12}$ have nonzero coefficients, which are plotted on the left panel of Figure \ref{c3:fig2:hetero}. It features a simple heteroskedastic model where the coefficient $\beta_{12}(\tau)$ changes smoothly across the quantile levels $\tau$. 
\item Example 2. The intercept function $\alpha^*(\tau)=0$. Only $x_1, x_2, x_{11}, x_{12}$ have nonzero coefficients, which are plotted on the right panel of Figure \ref{c3:fig2:hetero}. It presents a more complex scenario where the coefficients $\beta_{1}(\tau)$ and $\beta_{2}(\tau)$ are nonzero only within part of the quantile region.
\end{itemize}
For both examples, we generate $\bx=(x_1,\ldots, x_p)$ in the following way: first sample $(z_1,\ldots, z_p)\sim \mathcal{N}(\textbf{0}, \Sigma)$, where each $(i,j)$th component of $\Sigma$ is defined by $\Sigma_{ij} = \sigma^{|i-j|}$ for $i,j=1,\dots,p$ and $\sigma = 0.5, 0.8$; then set $x_j=\Phi(z_j)$ for active variables and $x_j=z_j$ for inactive ones, where $\Phi(\cdot)$ is the standard normal distribution function. The implementation of our method in \verb|R| and \verb|Rcpp| is available at:
\href{https://github.com/SangkyuStat/StaRQRK}{https://github.com/SangkyuStat/StaRQRK}.

% The implementation of our method using \verb|R| and \verb|Rcpp| is shared in \href{https://github.com/SangkyuStat/rqrscreen}{https://github.com/SangkyuStat/RQRscreen}.

%Figure \ref{c3:fig2:hetero} shows two examples (Examples 1 and 2) with the relatively simple heteroskedastic model where $\beta_{12}(\tau)$ smoothly changes across $\tau$ and the more complex heteroskedastic model where $\beta_{1}(\tau)$ and $\beta_{2}(\tau)$ are only active in specific quantile regions. 

%For Example 1, we suspect that the single quantile models cannot well detect $\beta_{12}(\tau)$ for $\tau = 0.5$, since the true function is 0, and for Example 2, we suspect that $\beta_1(\tau)$ and $\beta_2(\tau)$ cannot be easily found. For QPCS, we cannot track the selected variables that are selected after the number of observations ($n=300$), so we write it as 300+ for the cases that are larger than 300. The result is shown in Table \ref{c3:tab:example2}. The results show that our method is superior to other methods for both examples, while the single-quantile adaptive models do not work well on $\tau=0.5$ for both examples, and the model-free models also do not work well on both examples.

\begingroup
\renewcommand{\arraystretch}{1.5}
\begin{table}[!t] 
\centering 
\caption{Minimum model size (MMS) result with $p=10,000$ and $n=300$. The QPCS method can only track at most $n=300$ top variables, so if the MMS exceeded the sample size, the result is reported as `300+'.} 
\label{c3:tab:example2_p10000} 
\begin{tabular}{@{\extracolsep{0.1pt}} cc| cccc | cccc} 
\hline 
\hline 
& &\multicolumn{4}{c|}{$\sigma = 0.5$}  &\multicolumn{4}{c}{$\sigma = 0.8$}\\ 
& &5\% & 35\% & 65\% & 95\% &5\% & 35\% & 65\% & 95\% \\
\hline 
rqSIS & Example 1 & 4 & 4 & 4 & 4 & 4 & 4 & 5 & 7 \\ 
$(\Delta)$ & Example 2 & 4 & 4 & 4 & 4 & 4 & 4 & 4 & 6 \\ \hline
DC-SIS & Example 1 & 86.4 & 635 & 2105.4 & 6906 & 14 & 211.6 & 1106 & 6026.4 \\
 & Example 2 & 111.9 & 851.3 & 2196.6 & 6294.6 & 27.95 & 302.15 & 971.45 & 4257.55 \\  \hline
PC-Screen & Example 1 & 146 & 1126.8 & 2652.4 & 7353 & 15 & 309.4 & 1400.6 & 5812.8 \\ 
 & Example 2 & 294 & 1365.7 & 2958 & 6682.4 & 39.9 & 631.55 & 1714.1 & 5111.65 \\ \hline 
qaSIS & Example 1 & 88 & 1868.6 & 4643.2 & 9322.2 & 170.4 & 2520.6 & 5610.2 & 9224.4 \\  
 $(\tau=0.25)$ & Example 2 & 745.2 & 4146.1 & 6834.3 & 9616 & 903.3 & 5145.9 & 7613.85 & 9614.5 \\ \hline
qaSIS  & Example 1 & 438 & 2810.2 & 5753 & 9506 & 44 & 990 & 3749.2 & 8628.2 \\ 
 $(\tau=0.5)$ & Example 2 & 1161.6 & 4276.5 & 6704.8 & 9572.8 & 225.45 & 2631.95 & 5770.45 & 9314.55 \\ \hline
qaSIS  & Example 1 & 148.2 & 2019.4 & 4997.8 & 8908 & 22 & 438 & 2208.4 & 7571.8 \\ 
$(\tau=0.75)$ & Example 2 & 964.1 & 4258.3 & 6785.4 & 9478.8 & 51.95 & 958.25 & 2915.9 & 8285.65 \\ \hline
QPCS  & Example 1 & 4 & 4 & 4 & 300+ & 4 & 4 & 4 & 300+ \\ 
$(\tau=0.25)$ & Example 2 & 126.7 & 300+ & 300+ & 300+ & 272.8 & 300+ & 300+ & 300+ \\ \hline
QPCS  & Example 1 & 300+ & 300+ & 300+ & 300+ & 300+ & 300+ & 300+ & 300+ \\ 
$(\tau=0.5)$ & Example 2 & 300+ & 300+ & 300+ & 300+ & 300+ & 300+ & 300+ & 300+ \\ \hline
QPCS  & Example 1 & 4 & 4 & 4 & 300+ & 4 & 4 & 4 & 300+ \\ 
$(\tau=0.75)$ & Example 2 & 111.5 & 300+ & 300+ & 300+ & 272.85 & 300+ & 300+ & 300+ \\ 
\hline \hline \\[-1.8ex] 
\end{tabular} 
\end{table} 
\endgroup

The two examples are designed to challenge existing methods. In Example 1, it is expected that single-quantile-model methods (qaSIS and QPCS) will fail to detect $\beta_{12}(\tau)$ at $\tau = 0.5$, where its true value is zero. Similarly, in Example 2, the localized effects of $\beta_1(\tau)$ and $\beta_2(\tau)$ are expected to be difficult to identify. Also, the model-free methods (DC-SIS and PC-Screen) may not be as efficient as ours since our method exploits the model structure across different quantile levels. Indeed, this is confirmed in the simulation results from Table \ref{c3:tab:example2_p10000} (result for $p=5,000$ can be found in the supplementary material Table \ref{c3:tab:example2_p5000}), which demonstrate that our method is superior to the competing approaches in both examples.

\subsection{FDR Control Performance Comparison}

In Section \ref{c3:subsec:2.4}, we demonstrated why the standard knockoff procedure may not work for regional quantile regression, and developed a modified version via winsorizing. The main objective of this section is to evaluate the performance of the standard knockoff filter and our variant, in terms of FDR control and selection power, to validate the superiority of our proposed method. Throughout the section, we assume that we have already selected $p$ variables from the screening stage.

%We consider two examples based on the different types of coefficient functions $\beta(\tau)$ which were introduced in Figure \ref{c3:fig2:hetero}. Specifically, $\beta_1(\tau)$ and $\beta_2(\tau)$ on the right side, which are colored in green (Example 3) and blue (Example 4). That is, the heteroskedastic coefficient functions for Example 3 are only active when $\tau \leq 0.5$, while the heteroskedastic coefficient functions for Example 4 are only active when $\tau \geq 0.5$. We set the target quantile region $\Delta$ to be [0.7, 0.9], which means the heteroskedastic coefficient functions are inactive in Example 3 while they are active in Example 4. So, for Example 3, it is important not to select heteroskedastic coefficients, while it is important to select them in Example 4. We consider the conditional quantile model as follows:

We set the target quantile region to $\Delta = [0.7, 0.9]$, and consider the following conditional quantile form:
\begin{align*}
    Q_{y|\bx}(\tau) = \sum_{j=1}^{s} ax_j\beta_j + 
    \sum_{j=p-K+1}^{p} bx_j\beta_j(\tau), 
\end{align*}
where $s$ is the number of homoskedastic coefficients, $K$ is the number of heteroskedastic coefficients, and $a$ is a coefficient inflation factor that will be used to adjust the signal-to-noise ratio (SNR). The covariates are generated as follows: firstly, sample $(z_1,\ldots, z_p)\sim \mathcal{N}(\textbf{0}, \Sigma)$, where each $(i,j)$th component of $\Sigma$ is defined by $\Sigma_{ij} = \sigma^{|i-j|}$ for $i,j=1,\dots,p$; then set $x_j=\Phi(z_j),~j=p-K+1,\ldots, p$ and $x_j=z_j,~j=1,\ldots, p-K$, where $\Phi(\cdot)$ is the standard normal distribution function. The homoskedastic coefficients $\beta_j,~j=1,\dots,s$ are set to 1. We consider two examples for the choice of heteroskedastic coefficients:

\begin{itemize}
\item Example 3. All the coefficients $\{\beta_j(\tau)\}_{j=p-K+1}^p$ are equal to the function colored in green from the right panel of Figure \ref{c3:fig2:hetero}. Since the quantile region of interest is $\Delta=[0.7,0.9]$, this setup makes the last $K$ variables inactive, although they are not conditionally independent of $y$. This is exactly the scenario discussed in Section \ref{c3:subsec:2.4}, where the standard (derandomized) knockoff filter may lose FDR control. 
\item Example 4. All the coefficients $\{\beta_j(\tau)\}_{j=p-K+1}^p$ are equal to the function colored in blue from the right panel of Figure \ref{c3:fig2:hetero}. In this setting, the last $K$ variables are active. Even so, we will see that our method has some improvement over the standard knockoff procedure. 
\end{itemize}

We vary $s, a, \sigma$ to evaluate the performances across scenarios with various sparsity, SNR and design correlation levels. Five methods are considered for comparison:
\begin{enumerate}
    \item NoWinsor: Run the standard knockoff procedure directly for the original response variables.
    \item NoWinsor-Stable: Repeat NoWinsor $R$ times and run Steps 4-5 from Algorithm \ref{c3:alg:RQR_knockoff_stable}, with $R=10, \eta=0.5$. 
    \item WinsorEst-Stable: Run Steps 3-5 from Algorithm \ref{c3:alg:RQR_knockoff_stable}, with $R = 10,\eta = 0.5$ and $n_1=n_2=n/2$.
    \item WinsorTrue: Winsor the response using the true conditional quantiles and then run the standard knockoff procedure for the winsorized data. 
    \item WinsorTrue-Stable: Repeat WinsorTrue $R$ times and run Steps 4-5 from Algorithm \ref{c3:alg:RQR_knockoff_stable}, with $R=10, \eta=0.5$. WinsorEst-Stable can be viewed as an approximation of WinsorTrue-Stable.     
\end{enumerate} 
Note that both WinsorTure and WinsorTrue-Stable use the true conditional quantile information, hence are not practical methods. We include them as a benchmark in the comparison. The target FDR level is set to $q=0.2$, and the tuning parameter for the non-crossing LASSO is selected using 5-fold cross-validation (CV). Each scenario is replicated 100 times, and we report the average false discovery proportion and true positive proportion as (approximate) FDR and power, respectively. For each example, we investigate three scenarios. For Example 3: (1) varying $s$ while fixing $a = 1/5$ and $\sigma = 0$; (2) varying $a$ while fixing $s=9$ and $\sigma = 0$; and (3) varying $\sigma$ while fixing $a = 1/5$ and $s=9$. We fix $b=1$ for all three scenarios in Example 3 to make it hard to find $s$ true constant signals. For Example 4: (1) varying $s$ while fixing $a = 1/5$ and $\sigma = 0$; (2) varying $a$ while fixing $s=9$ and $\sigma = 0$; and (3) varying $\sigma$ while fixing $a = 1/5$ and $s=9$. We fix $b=0.2$ for all three scenarios in Example 4 to make it hard to find $K$ true heteroskedastic signals. The results for Examples 3 and 4 are presented in Figures \ref{fig3:ex3.1} and \ref{fig4:ex3.2}, respectively.

\begin{figure}[t!]
    \centering
    \begin{subfigure}[t]{\textwidth}
        \centering
        \includegraphics[width=\textwidth, height=0.24\textheight]{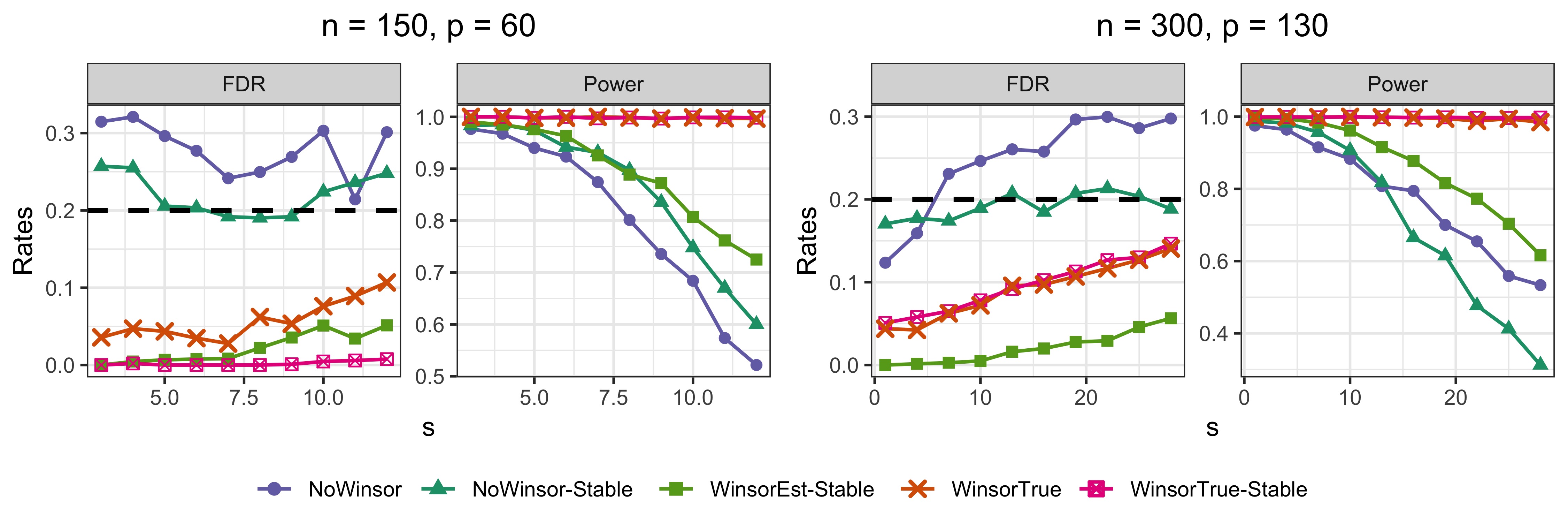}
    \end{subfigure}%
    \\
    \begin{subfigure}[t]{\textwidth}
        \centering
        \includegraphics[width=\textwidth, height=0.24\textheight]{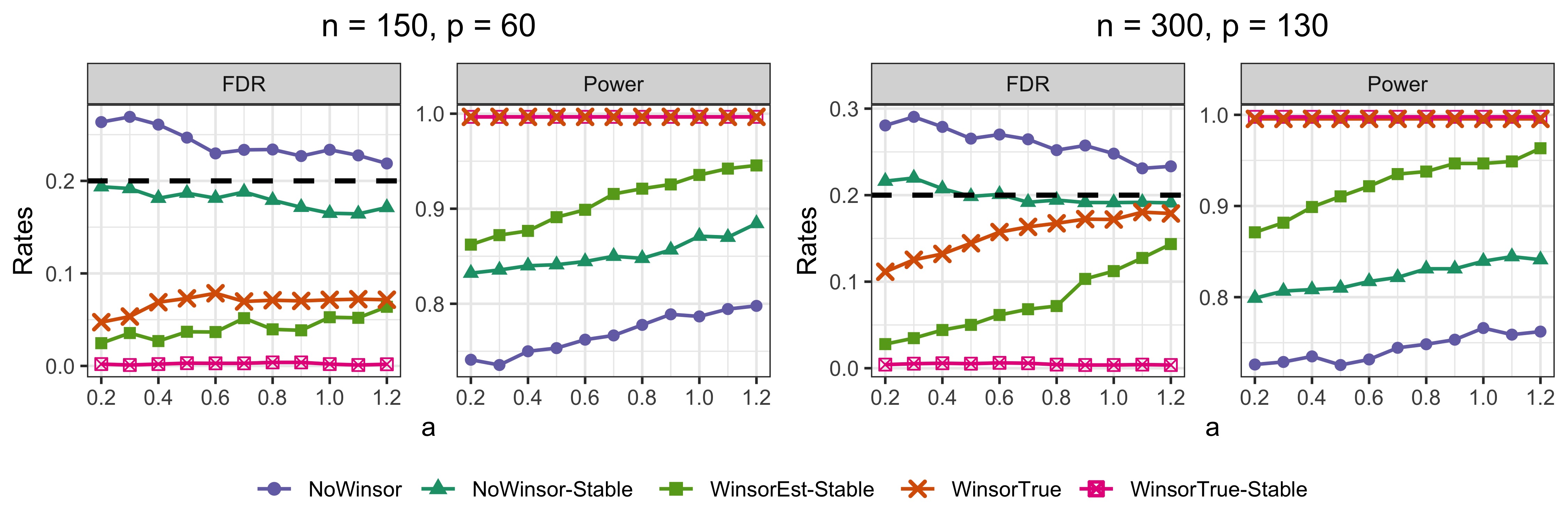}
    \end{subfigure}%    
    \\
    \begin{subfigure}[t]{\textwidth}
        \centering
        \includegraphics[width=\textwidth, height=0.24\textheight]{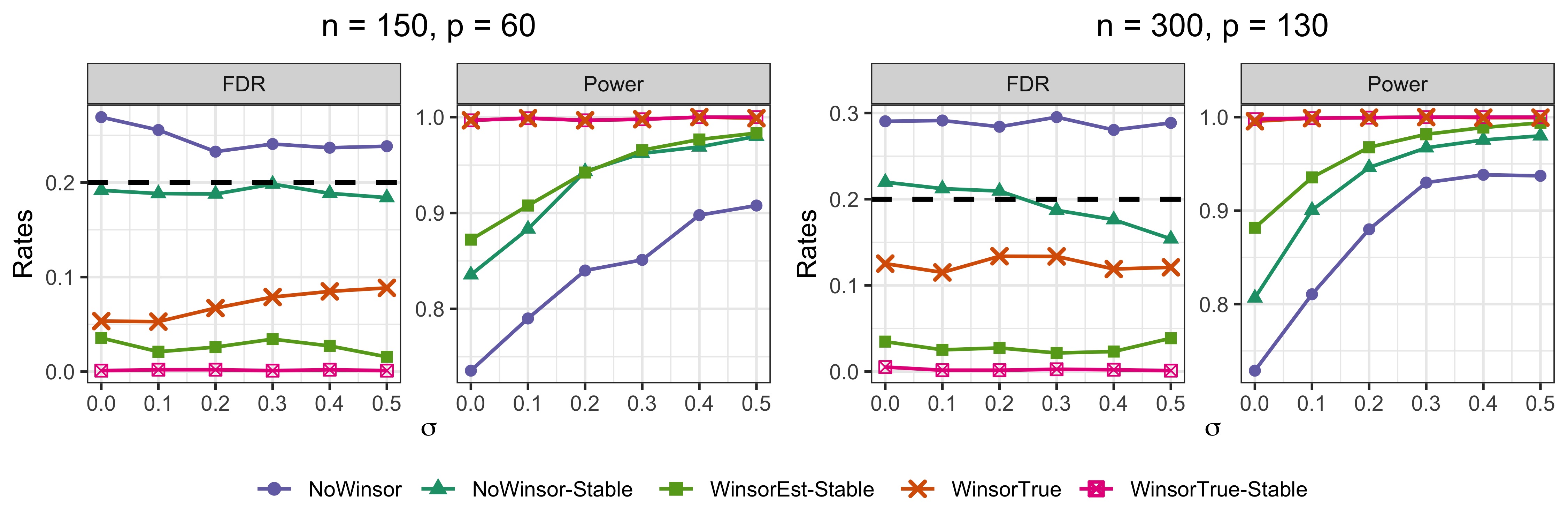}
    \end{subfigure}%    
    \caption{Results obtained by varying $s$, $a$, $\sigma$ for $n \in \{150, 300\}$, and $p \in \{60, 130\}$. For all simulations, we fix $K = 0.6p$. }
    
    \label{fig3:ex3.1}
\end{figure}

\begin{figure}[t!]
    \centering
    \begin{subfigure}[t]{\textwidth}
        \centering
        \includegraphics[width=\textwidth, height=0.24\textheight]{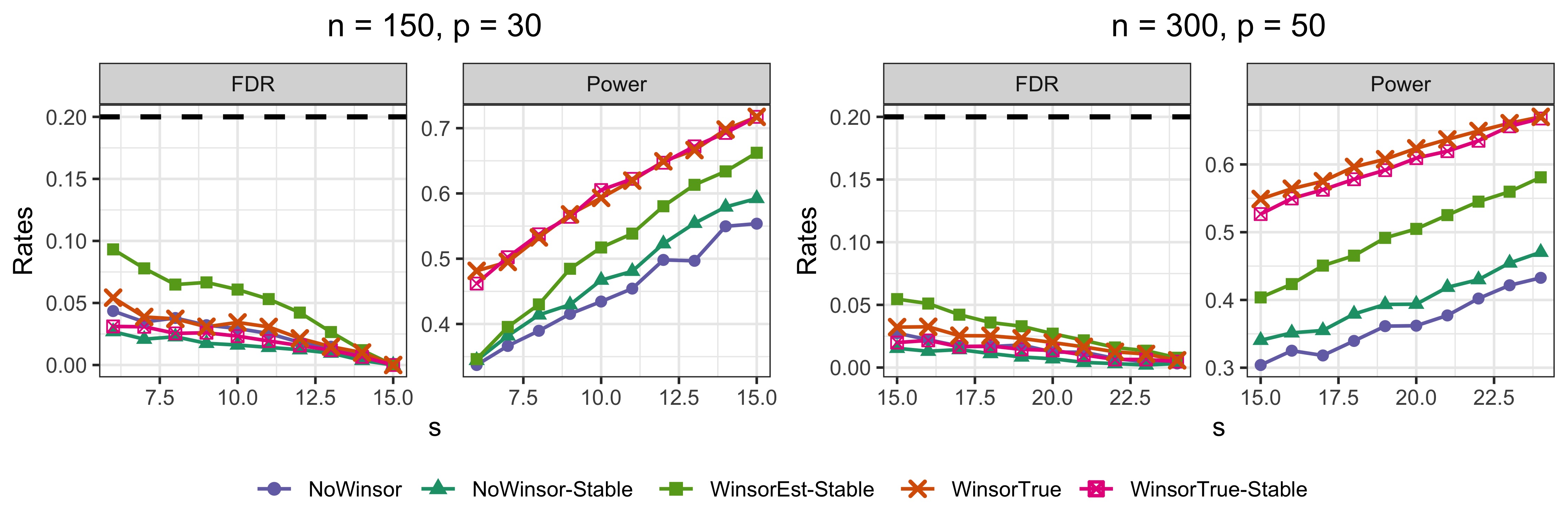}
    \end{subfigure}%
    \\
    \begin{subfigure}[t]{\textwidth}
        \centering
        \includegraphics[width=\textwidth, height=0.24\textheight]{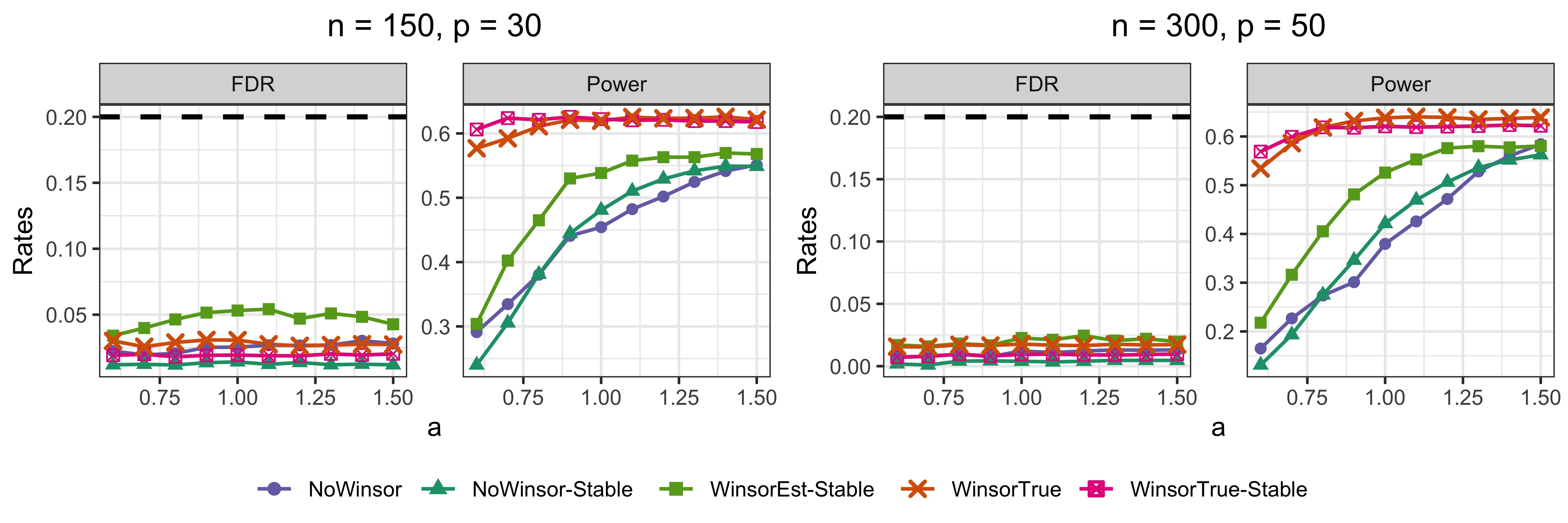}
    \end{subfigure}%    
    \\
    \begin{subfigure}[t]{\textwidth}
        \centering
        \includegraphics[width=\textwidth, height=0.24\textheight]{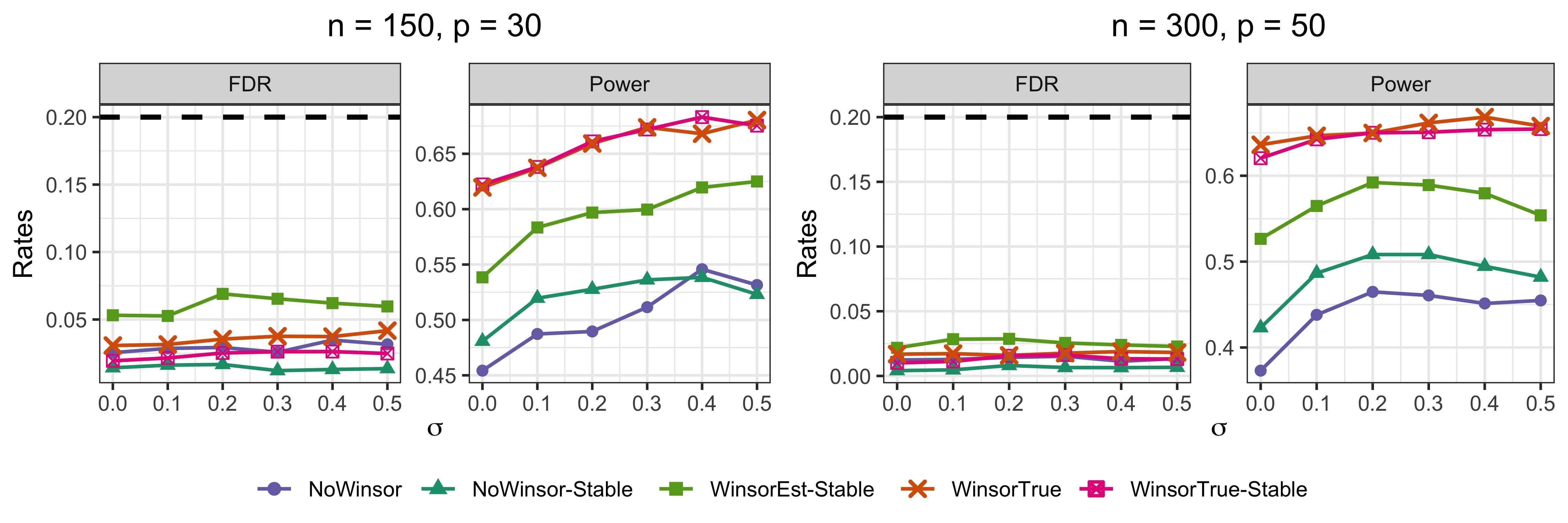}
    \end{subfigure}%    
    \caption{Results obtained by varying $s$, $a$, $\sigma$ for $n \in \{150, 300\}$, and $p \in \{30, 50\}$. For all simulations, we fix $K = 0.5p$. }
    
    \label{fig4:ex3.2}
\end{figure}

Referring to the results for NoWinsor v.s. NoWinsor-Stable and WinsorTrue v.s. WinsorTrue-Stable, we see that the derandomization (Steps 4-5 from Algorithm \ref{c3:alg:RQR_knockoff_stable}) helps improve FDR and/or power in most settings. The results for Example 3 underscore the critical role of the winsorizing step. In most scenarios, the non-winsorized procedures failed to control the FDR at the target level. In contrast, the methods that employed winsorizing, both with true and estimated conditional quantiles, successfully maintained FDR control. Furthermore, the non-winsorized versions consistently demonstrated much lower selection power than the winsorized approaches. For Example 4, while all methods controlled the FDR, the non-winsorized procedures again exhibited substantially worse selection power. These two examples collectively highlight that our proposed winsorizing strategy is essential not only for achieving valid FDR control but also for preserving good statistical power.
%The results show that for Example 3, in most cases, while no censored version failed to control FDR in almost every scenario, the ones using true conditional quantiles or using estimated conditional quantiles for censoring controlled FDR well, and also the selection power of the non-censored version was much worse compared to the other methods. For Example 4, while the FDR was controlled for all methods, but still the statistical power of the non-censored version was much worse compared to the other methods in almost every scenario. These two examples show that using censoring is very important not only for controlling the FDR but also for the selection power. 

%\section{Identifying CpG Sites Associated with Elevated LINE-1 Activity}
\section{Application to LINE-1 Methylation Data}
%\section{Discovering Heterogeneous Activating Effects in Cancer Epigenomes}
\label{real:data:sec}

{%\bf 

\subsection{Data Description}

This study uses data from The Cancer Genome Atlas (TCGA) head and neck squamous cell carcinoma (HNSC) cohort, available from the Genomic Data Commons (GDC). Two types of data are used:
(1) CpG methylation levels, which serve as covariates, and
(2) a derived measure of LINE-1 activity, based on the number of transposable element (TE) insertions detected from whole-genome sequencing.

{\bf Outcome (LINE-1 activity).}
The outcome variable quantifies LINE-1 activity using the total number of somatic TE insertions detected in each tumor genome. These counts were derived from whole-genome sequencing (WGS) data of the same TCGA HNSC samples, processed with the Sherlock-Lung WGS analysis pipeline and the TraFiC-mem pipeline \citep{rodriguez2020pan, zhang2021genomic, diaz2025mutagenic}. Because the dataset includes only tumor samples, most observations exhibit at least one insertion. However, the distribution of insertion counts is highly right-skewed: most tumors display few insertions, whereas a small subset shows substantially higher values.
 To stabilize variance and reduce skewness, we modeled the response variable as
$y=\log(\mbox {TE insertions} +1).$

{\bf Covariates (CpG methylation).}
CpG-level DNA methylation profiles were obtained from the Illumina HumanMethylation450 BeadChip (``450k array"), which assays over 450,000 CpG sites across the genome. We selected approximately 11,500 CpG probes located within promoter regions of retrotransposition-competent LINE-1 elements previously identified as active sources of somatic insertions \citep{rodriguez2020pan}. Each methylation level is expressed as a beta value between 0 (unmethylated) and 1 (fully methylated), representing the proportion of methylated alleles at that CpG site.  A two-step quality control procedure was applied to methylation data. 
(1) CpG sites with more than 10\% missing values across all samples were excluded. 
(2) For the remaining CpG sites, missing beta values were imputed using a local averaging approach, in which each missing value was replaced by the mean of its nearest neighboring CpG sites within the same sample. 

After processing the Head and Neck Squamous Cell Carcinoma (HNSC) cohort consisted of $n = 474$ tumor samples and $p = 11{,}484$ CpG sites. 
The samples were further divided into three groups as described in Section \ref{sec:algorithm}, Algorithm \ref{c3:alg:RQR_knockoff_stable} (e.g., $n_1 = 237$, $n_2 = 119$, $n_3 = 118$).

\subsection{Identifying CpG Sites Associated with Elevated LINE-1 Activity}

LINE-1 retrotransposon activity is a relatively recent focus in cancer genomics, but growing evidence indicates that aberrant activation of LINE-1 elements contributes to genomic instability and cancer progression \citep{tubio2014extensive, xiao2016line, rodriguez2020pan}. The activity of LINE-1 is often quantified through the number of TE insertions detected per sample, referred to as total TE insertions. This count variable ranges from zero to high values, where zero indicates no detectable activity. Consequently, total TE insertions are widely used as a surrogate measure for LINE-1 activity.

Previous studies \citep{iskow2010natural, tristan2020tumor} that included both normal and tumor samples or tumor-only samples typically used a binary characterization of LINE-1 activity, distinguishing whether any TE insertions were detected (total TE insertions = 0 vs. $>$0). This simplification was largely adopted because higher insertion counts are rare in most datasets. In contrast, our tumor-only data show that the distribution of total TE insertions among cancer patients is highly right-skewed, with values reaching up to 795. Limiting the analysis to a binary outcome, such as the presence or absence of insertions, would therefore overlook biologically relevant variation that may carry important information.

Given the limited literature on the distributional properties of total TE insertions, we employ a quantile-based framework to better characterize heterogeneity in LINE-1 activity. Specifically, we partition the outcome distribution of total TE insertions into three intervals corresponding to the 5th–35th percentile ($\Delta_1$), 35th–65th percentile ($\Delta_2$), and 65th–95th percentile ($\Delta_3$), representing low, moderate, and high LINE-1 activity, respectively. We then identify CpG sites whose methylation levels are associated with these distinct levels of LINE-1 activity. The tuning parameters for the StaRQR-K procedure were chosen to match those used in the simulation study ($R = 10$, $\eta = 0.5$, $q = 0.2$), and the screening threshold was set to $d = 2\lfloor n_1 / \log(n_1) \rfloor = 86$.

}

\subsection{Results for the Head and Neck Squamous Cell Carcinoma Cohort}

We applied the proposed StaRQR-K method to identify CpG sites whose methylation levels are associated with total TE insertions. For comparison, we also evaluated results from the initial screening step followed by the Derandomized Knockoff procedure, hereafter denoted as the \textit{screening + derandomized knockoff} approach. This benchmark also employs the non-crossing quantile LASSO to model regional effects but does not perform winsorization.

The selected CpG locations from both methods are summarized in Table \ref{c3:tab:sel_comparison}. Our StaRQR-K method identified numerous CpG locations with strong region-specific effects that were undetected by the derandomized knockoff procedure. Several CpG locations were associated with only one of $\Delta_1$, $\Delta_2$, or $\Delta_3$, whereas others were identified in two regions. 
\begin{table}[t]
\centering
\caption{Selected CpG locations of both methods for the HNSC cohort. It shows that several CpG locations are found in multiple regions and appear in both methods.}
\label{c3:tab:sel_comparison}
\begin{tabular}{c|p{6.25cm}|p{6.25cm}}
\hline\hline
 Region & Screening + Derandomized Knockoff & StaRQR-K \\ 
\hline
\aftercline
 Low ($\Delta_1$)  & {cg16543943} & cg03926968, cg07559427, cg13229363, cg14018786, cg15480200, {cg16543943} \beforecline 
\cline{1-3}\aftercline
                       Mid ($\Delta_2$)  & {cg16543943} & cg01106989, cg07483086, cg14191244, cg15841865, {cg16543943}, {cg16712481}, cg17660945, {cg22884082} \beforecline 
\cline{1-3}\aftercline
                       High ($\Delta_3$) & {cg03840912}, cg07483086, {cg16712481}, {cg23430295}, cg27660756 & {cg03840912}, cg04937596, cg07559427, cg11548446, cg14191244, {cg16712481}, {cg22884082}, {cg23430295} \beforecline 
%\hline
%\aftercline
%\multirow{6}{*}{LUSC} & Low ($\Delta_1$)  & Nothing selected & cg00778084, cg06498272, cg08277306, {cg14191244}, cg15320001, cg16658255, {cg23036949}, cg23871318, cg25354918, cg25844605, cg27203917 \beforecline 
%\cline{2-4}\aftercline 
%                      & Mid ($\Delta_2$)  & Nothing Selected                    & cg11311032, {cg14191244}, cg15357945 \beforecline 
%\cline{2-4}\aftercline 
%                      & High ($\Delta_3$) & {cg14191244}, {cg23036949} & cg00558804, cg08005007, {cg14191244} \\ 
\hline\hline
\end{tabular}
\end{table}

We computed the regional quantile prediction error (PE) on the testing data to assess model performance when predicting $y$ using CpG sites selected by StaRQR-K within each of $\Delta_1$, $\Delta_2$, and $\Delta_3$. 
Unlike the conventional mean squared prediction error, this metric is based on the quantile loss function, which evaluates how well the fitted conditional quantile functions approximate the observed responses across different parts of the outcome distribution. 
Intuitively, a smaller quantile-based PE indicates that the CpG sites selected by the method yield more accurate prediction of the conditional distribution of LINE-1 activity, not merely its mean level. 
This framework allows us to examine whether the selected CpG sites capture the heterogeneity of LINE-1 activity across lower, middle, and upper quantile regions. 
For comparison, we also calculated the corresponding PEs for the derandomized knockoff and the preliminary screening-only approaches.

Specifically, the regional quantile prediction error was defined as
\begin{align*}
\text{PE}(\Delta)
= \frac{\sum_{i=1}^n \mathbbm{1}\{\text{Subject } i \text{ in testing set}\} \sum_{l=1}^L \rho_{\tau_l}(y_i - \bx_i^T \hat{\beta}(\tau_l))}
{L \sum_{i=1}^n \mathbbm{1}\{\text{Subject } i \text{ in testing set}\}},
\end{align*}
where $\Delta = [\Delta_l, \Delta_u]$ with $\Delta_l < \tau_1 < \dots < \tau_L < \Delta_u$, and $\hat{\beta}(\cdot)$ denotes the regional nonparametric estimates from the training set based on B-spline bases \citep{schumaker2007spline}. The ratio of training to testing data was set to 0.8:0.2, and the process was repeated 100 times using randomly partitioned datasets. The results are shown in Figure \ref{c3:fig3:fig:prediction_error}.
\begin{figure}[t]
  \centering
  \includegraphics[width=0.9\textwidth]{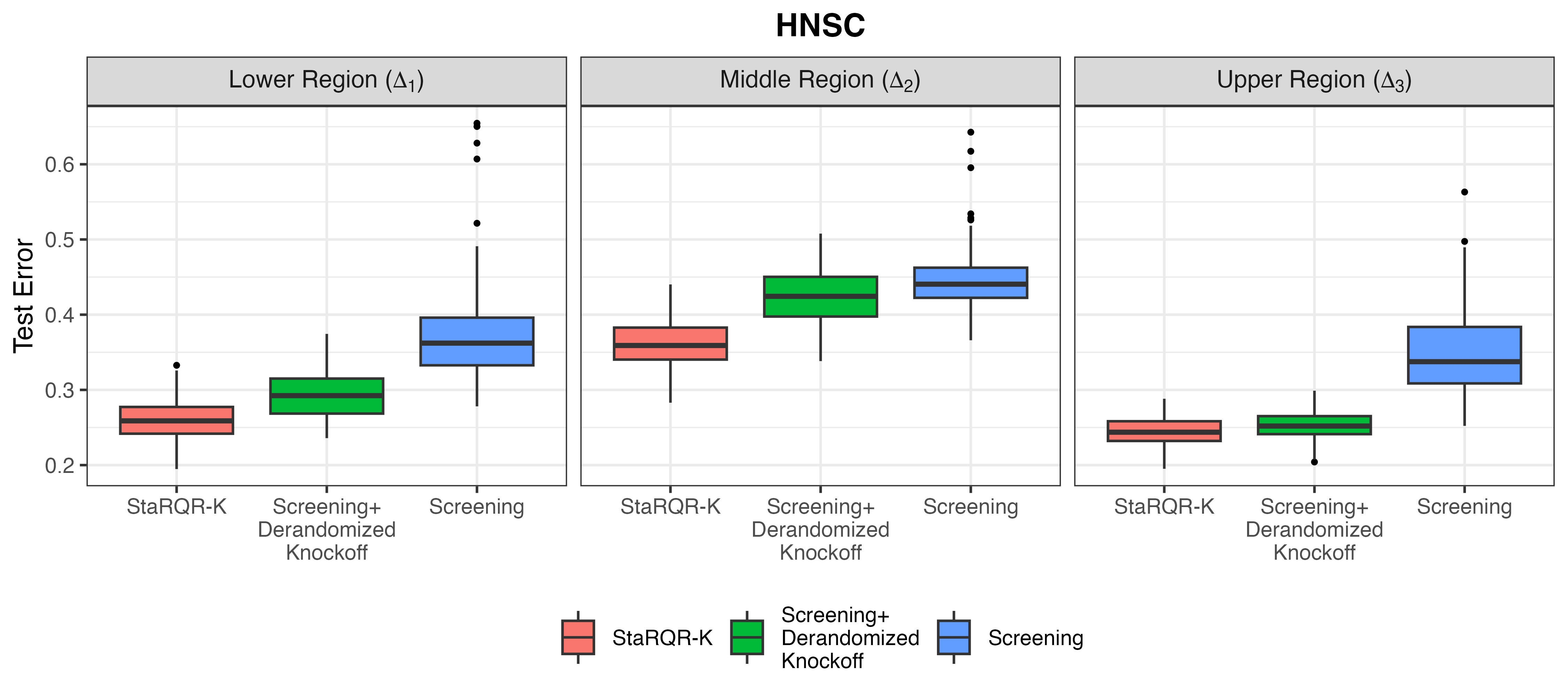}
  \caption{Out-of-sample prediction error boxplots from repeated train/test splits, comparing models built with variables selected by StaRQR-K, by the derandomized knockoff procedure, and by the preliminary screening step.}
  \label{c3:fig3:fig:prediction_error}
\end{figure}
The results indicate that the models constructed using CpG sites selected by StaRQR-K consistently achieve lower prediction error compared to those based on sites identified by the derandomized knockoff or screening-only approaches.

We further apply a regional quantile hypothesis test to assess the significance of each selected CpG site. The null and alternative hypotheses are formulated as follows:
\begin{align*}
    H_0:~\beta^*_j(\tau) = 0\text{ for all }\tau\in\Delta~~\text{ versus } ~~H_1:~\beta^*_j(\tau) \neq 0\text{ for some }\tau\in\Delta,\text{ and }j\in\{1,\dots,p\}.
\end{align*}
We conduct this test using the quantile rank score generated from a composite quantile regression estimator, which utilizes linear B-spline basis, by following the procedure described in Section 2 of \cite{PARK201713}. This procedure yields a test statistic that is asymptotically normal under $H_0$. We put the details of the procedure in the supplementary materials.

%We further apply the regional quantile hypothesis test to assess the significance of each selected CpG site as below:
%\begin{align*}
%    H_0:~\beta(\tau) = 0\text{ for all }\tau\in\Delta~~\text{ versus } ~~H_1:~\beta(\tau) \neq 0\text{ for some }\tau\in\Delta.
%\end{align*}
%We conduct the hypothesis testing by using the quantile rank score based on the linear B-spline basis and the corresponding procedure described in Sections 2 and 3 of \cite{PARK201713}. The aim of the test is to see whether the quantile rank score follows a normal distribution or not.

Table \ref{c3:tab:sel_cpg_test} presents the post-hoc test results and their scientific interpretation. 
Many of the selected CpG locations exhibited negative coefficients, consistent with the hypothesis that local demethylation accompanies increased LINE-1 activity. This observation aligns with the established role of CpG promoter demethylation in reactivating LINE-1 transcription, supporting the biological plausibility of our findings.

Several significant CpG locations were located near LINE-1 source elements previously identified as retrotransposition hotspots in PCAWG \citep{rodriguez2020pan}. For example, CpG cg07483086 on chromosome 2 lies near the promoter region of the 2q24.1 hotspot source element, linking this epigenetic mark to activation of a known driver of somatic LINE-1 insertions. These overlaps further reinforce the biological relevance of the associations detected by our framework. Post-hoc validation further confirmed that the CpG locations uniquely identified by the regional quantile regression framework are statistically significant. These validated CpG locations display negative coefficients, supporting a model in which CpG demethylation promotes LINE-1 mobilization. In practical terms, decreased methylation at these CpG locations corresponds to higher quantiles of LINE-1 activity, consistent with an activating role. Interestingly, the identified CpG locations exhibited quantile-specific effects, with some appearing only in the lower or upper quantile regions of the LINE-1 activity distribution. This pattern indicates that certain CpG sites influence low or high LINE-1 activity predominantly in tumors, rather than mean or uniformly across the entire distribution. A key biological insight from this analysis is that LINE-1 activation in tumors is not solely the result of diffuse global hypomethylation, which was the main result from the previous analyses, but can also be driven by targeted demethylation at hotspot source elements. This finding strengthens the mechanistic link between local epigenetic deregulation and the mobilization of specific LINE-1 source elements that dominate the retrotransposition landscape in cancer.

\begin{table}[ht]
\centering\footnotesize
\caption{Details of the selected CpG locations by quantile region. Distance denotes the distance to the closest LINE-1 genes. Sign denotes the sign of the estimated coefficients of the composite regional quantile regression model, and Test Stat and P-values are test statistics and p-values, respectively, defined in \cite{PARK201713}.}
\label{c3:tab:sel_cpg_test}
\begin{tabular}{c|cccccccc}
\hline\hline\aftercline
 Region & CpG ID & Chromosome & CpG position & Cytoband & Distance & \makecell{Sign \\ $~$} & \makecell{Test Stat \\ \citep{PARK201713}}  & \makecell{P-value \\ $~$}\\
\hline
\aftercline
%\multirow{22}{*}{HNSC}
  \multirow{6}{*}{Low ($\Delta_1$)}
   & cg03926968 & chr3  & 142440284 & q23   & 0  & (-) & 25.27 & $\leq$ 1e-100\\
  & cg07559427 & chr7  &  95467004 & q21.3 & 0  & (-) & 35.67 & $\leq$ 1e-100\\
  & cg13229363 & chr7  & 157961051 & q36.3 & 9  & (-) & 42.41 & $\leq$ 1e-100\\
  & cg14018786 & chr7  & 149159458 & q36.1 & 0  & (-) & 45.54 & $\leq$ 1e-100\\
  & cg15480200 & chr10 & 134630615 & q26.3 & 1  & (-) & 37.32 & $\leq$ 1e-100\\
  & cg16543943 & chr16 &  47216988 & q12.1 & 0  & (-) & 28.01 & $\leq$ 1e-100 \beforecline
  \cline{1-9}
  \aftercline
    \multirow{8}{*}{Mid ($\Delta_2$)}
   & cg01106989 & chr6  &  52858459 & p12.2 & 0  & (-) & 43.21 & $\leq$ 1e-100\\
  & cg07483086 & chr2  & 150444834 & q23.2 & 0  & (-) & 62.70 & $\leq$ 1e-100\\
  & cg14191244 & chr12 &  21547892 & p12.1 & 51 & (-) & 105.36 & $\leq$ 1e-100\\
  & cg15841865 & chr7  & 143173551 & q35   & 0  & (-) & 41.23 & $\leq$ 1e-100\\
  & cg16543943 & chr16 &  47216988 & q12.1 & 0  & (-) & 79.11 & $\leq$ 1e-100\\
  & cg16712481 & chr16 &  60556955 & q21   & 0  & (-) & 68.12 & $\leq$ 1e-100\\
  & cg17660945 & chr3  & 149511095 & q25.1 & 99 & (-) & 62.06 & $\leq$ 1e-100\\
  & cg22884082 & chr6  &  88039023 & q15   & 26 & (-) & 79.24 & $\leq$ 1e-100 \beforecline
  \cline{1-9}
    \aftercline
   \multirow{8}{*}{High ($\Delta_3$)}
   & cg03840912 & chr5  &  35494577 & p13.2 & 0  & (-) & 102.13 & $\leq$ 1e-100\\
  & cg04937596 & chr12 &  27236803 & p11.23& 27 & (-) & 62.06 & $\leq$ 1e-100\\
  & cg07559427 & chr7  &  95467004 & q21.3 & 0  & (-) & 98.27 & $\leq$ 1e-100\\
  & cg11548446 & chr2  &  28867073 & p23.2 & 86 & (-) & 47.55 & $\leq$ 1e-100\\
  & cg14191244 & chr12 &  21547892 & p12.1 & 51 & (-) & 91.00 & $\leq$ 1e-100\\
  & cg16712481 & chr16 &  60556955 & q21   & 0  & (-) & 69.70 & $\leq$ 1e-100\\
  & cg22884082 & chr6  &  88039023 & q15   & 26 & (-) & 67.35 & $\leq$ 1e-100\\
  & cg23430295 & chr7  & 140732738 & q34   & 0  & (-) & 43.38 & $\leq$ 1e-100 \\
%\beforecline
%\hline
%\aftercline
%\multirow{17}{*}{LUSC}
%  & \multirow{11}{*}{Low ($\Delta_1$)}
%   & cg00778084 & chr1  & 173681595 & q25.1 & 80 & (-) & 37.61 & 0\\
% & & cg06498272 & chr19 &   1192499 & p13.3 & 23 & (-) & 15.77 & 2.32e-56\\
% & & cg08277306 & chr5  &    598817 & p15.33& 0  & (-) & 50.77 & 0\\
% & & cg14191244 & chr12 &  21547892 & p12.1 & 51 & (-) & 61.38 & 0\\
% & & cg15320001 & chr3  & 138066216 & q22.3 & 0  & (-) & 40.96 & 0\\
% & & cg16658255 & chr5  &    598906 & p15.33& 0  & (-) & 34.35 & 6.01e-259\\
% & & cg23036949 & chr3  & 185277779 & q27.2 & 93 & (-) & 41.16 & 0\\
% & & cg23871318 & chr7  & 149159411 & q36.1 & 0  & (-) & 38.85 & 0\\
% & & cg25354918 & chr2  & 100170766 & q11.2 & 84 & (-) & 40.14 & 0\\
% & & cg25844605 & chr20 &   5485475 & p12.3 & 0  & (-) & 36.46 & 2.22e-291\\
% & & cg27203917 & chr15 &  93257579 & q26.1 & 0  & (-) & 11.54 & 4.21e-31\beforecline
% \cline{2-10}
% \aftercline
% & \multirow{3}{*}{Mid ($\Delta_2$)}
%   & cg11311032 & chr6  &  83268113 & q14.1 & 7  & (-) & 54.61 & 0\\
% & & cg14191244 & chr12 &  21547892 & p12.1 & 51 & (-) & 58.25 & 0\\
% & & cg15357945 & chr11 &  57158361 & q12.1 & 52 & (-) & 45.11 & 0\beforecline
% \cline{2-10}
% \aftercline
%  & \multirow{3}{*}{High ($\Delta_3$)}
%   & cg00558804 & chr16 &  88769761 & q24.3 & 53 & (+) & 26.82 & 8.56e-159\\
% & & cg08005007 & chr3  & 116079127 & q13.31& 0  & (-) & 19.61 & 5.85e-86\\
% & & cg14191244 & chr12 &  21547892 & p12.1 & 51 & (-) & 38.76 & 0\\
\hline\hline
\end{tabular}
\end{table}

\section{Concluding Remarks}\label{sec:conclusion}

{ 

In this work, we introduced StaRQR-K, a statistical framework for variable selection in ultrahigh-dimensional regional quantile regression. The integration of regional quantile sure independence screening with a stabilized winsorizing-based knockoff filter effectively targets controlled false discovery at the prespecified level, which was not previously available for this class of models. The method showed strong performance in high-dimensional settings and produced biologically meaningful discoveries in the TCGA HNSC cohort. These results illustrate how quantile-specific variable selection can detect CpG associations with LINE1 activity that standard approaches miss. While our stabilization procedure significantly enhances selection power, the inherent reliance on data splitting for knockoff construction presents a limitation. We acknowledge that this splitting mechanism may attenuate finite sample guarantees and inevitably reduce statistical power in scenarios where the sample size is limited. Future work can extend the framework by exploring alternative screening strategies, improving knockoff construction under complex designs, and evaluating its robustness across diverse high-dimensional applications.

%StaRQR-K fits within a broader landscape of false discovery control methods, and the overall framework can be extended to incorporate confounding variables, handle censored outcomes, and support broader applied uses in clinical and health research.

%StaRQR-K advances quantile-based variable selection in high-dimensional studies and provides a practical tool for detecting heterogeneous associations between outcomes and covariates. 

}

\section{Supplementary Materials} \label{sec:append}
\subsection{Proof of Theorem \ref{c3:thm:one}} \label{c3:proof:thm1}

Throughout the proof, we use $D_1,D_2,\ldots$ to denote constants that may depend on the fixed constants such as $\Delta_l, \Delta_u, K_x, k, \underline{f},\bar{f},\bar{f}', c_1, \kappa, \gamma$ from the main text. An explicit (though not optimal) dependence of $\{D_j, j=1,2,\ldots\}$ on the aforementioned constants can be tracked down. However, since it does not provide much more insight, we will often not present the explicit forms of $\{D_j, j=1,2,\ldots\}$, and this will greatly help streamline the proof. The constants $\{D_j,j=1,2,\ldots\}$ may vary from lines to lines.

\begin{proof}
Theorem \ref{c3:thm:one} is proved as follows:
\begin{align*}
\min_{j\in \mathcal{M}(\Delta)} \frac{1}{L}\sum_{\ell=1}^L\big(\mathbf{B}(\tau_{\ell})^T\hat{\bb}_j\big)^2 & \overset{(a)}{\geq} \min_{j\in \mathcal{M}(\Delta)}  \frac{1}{2L}\sum_{\ell=1}^L\big(\mathbf{B}(\tau_{\ell})^T\bb^*_j\big)^2- \max_{j\in \mathcal{M}(\Delta)} \frac{1}{L}\sum_{\ell=1}^L\big(\mathbf{B}(\tau_{\ell})^T(\hat{\bb}_j-\bb_j^*)\big)^2 \\
& \overset{(b)}{\geq} \frac{\kappa^2}{4}n^{-2\gamma}- \lambda_{\max}\Big(\frac{1}{L}\sum_{\ell=1}^L \bB(\tau_{\ell})\bB(\tau_{\ell})^T\Big) \cdot \max_{j\in \mathcal{M}(\Delta)} \|\hat{\bb}_j-\bb^*_j\|_2^2 \\
&  \overset{(c)}{\geq} \frac{\kappa^2}{4}n^{-2\gamma}- D_5^2c_n^2N^2n^{-1}\Big(\frac{D_2}{N}+\frac{D_3}{L}\Big) \\
&  \overset{(d)}{\geq} \frac{\kappa^2}{5}n^{-2\gamma}.
\end{align*}
Here, $(a)$ holds by the basic inequality $a^2\geq b^2/2-(a-b)^2$; $(b)$ is by Lemma \ref{main:result:2}; $(c)$ is due to Lemmas \ref{main:result1} and \ref{useful:1}; $(d)$ holds under Condition \ref{con:6} with the choice $c_n=\sqrt{2D_1^{-1}\log n}$.
\end{proof}

\begin{lemma}
\label{main:result1}
Assume Conditions \ref{con:1}, \ref{con:3}, \ref{con:4}, and $N=o(L), c_n Nn^{-1/2}=o(1), c_n\rightarrow \infty$. With probability at least $1-ne^{-D_1c_n^2}$, the following holds
\begin{align*}
\|\hat{\bb}_j-\bb_j^*\|_2 \leq D_5 c_n Nn^{-1/2}, \quad \forall j\in \mathcal{M}(\Delta).
\end{align*}
\end{lemma}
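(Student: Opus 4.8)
\textbf{Proof proposal for Lemma \ref{main:result1}.} The plan is to run the standard convexity-based $M$-estimation argument for quantile regression, reparametrized so that the target is the spline-coefficient deviation. Stack the marginal design into $\bz_{i\ell}=(x_{ij}\mathbf{B}(\tau_\ell)^\top,\ \mathbf{B}(\tau_\ell)^\top)^\top\in\mathbb{R}^{2N}$ and write $\btheta=(\bb^\top,\ba^\top)^\top$, so the empirical criterion is $\sum_{\ell,i}\rho_{\tau_\ell}(y_i-\bz_{i\ell}^\top\btheta)$ and $\hat\btheta_j$ is its minimizer, while $\btheta_j^*=((\bb_j^*)^\top,(\ba_j^*)^\top)^\top$ minimizes the population analogue in \eqref{star:def}. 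Setting $\bdelta=\btheta-\btheta_j^*$, the recentered objective
\[
\Gamma_n(\bdelta)=\sum_{\ell,i}\big[\rho_{\tau_\ell}(u_{i\ell}-\bz_{i\ell}^\top\bdelta)-\rho_{\tau_\ell}(u_{i\ell})\big],\qquad u_{i\ell}=y_i-\bz_{i\ell}^\top\btheta_j^*,
\]
is convex and minimized at $\hat\bdelta=\hat\btheta_j-\btheta_j^*$. Applying Knight's identity $\rho_\tau(u-v)-\rho_\tau(u)=-v\,\psi_\tau(u)+\int_0^v\!\big(I(u\le s)-I(u\le 0)\big)\,ds$ with $\psi_\tau(u)=\tau-I(u\le 0)$ decomposes $\Gamma_n(\bdelta)=-\bdelta^\top\bS_n+R_n(\bdelta)$, where $\bS_n=\sum_{\ell,i}\bz_{i\ell}\,\psi_{\tau_\ell}(u_{i\ell})$ is the linear (score) term and $R_n(\bdelta)\ge0$ is the curvature term.

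First I would control the linear term. Because $\btheta_j^*$ satisfies the population first-order condition $\frac1L\sum_\ell\mathbb{E}[\bz_\ell\psi_{\tau_\ell}(u_\ell)]=0$ (differentiability coming from the bounded density in Condition \ref{con:3}), the vector $\bS_n=\sum_i\bV_i$ is a sum of i.i.d.\ mean-zero summands $\bV_i=\sum_\ell\bz_{i\ell}\psi_{\tau_\ell}(u_{i\ell})$. Each summand is bounded: $|\psi_{\tau_\ell}|\le1$, the covariates are bounded by Condition \ref{con:1}, and the normalized B-spline basis has at most $k+1$ nonzero entries of size $O(1)$ at any $\tau$. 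A Bernstein/Hoeffding bound applied coordinatewise, together with a union bound over the $2N$ coordinates and a truncation event for the $n$ observations, should yield $\|\bS_n\|_2\le D_4\,c_n\sqrt{n}\cdot(\text{spline factor})$ on an event of probability at least $1-ne^{-D_1c_n^2}$; here the confidence parameter $c_n$ enters through the Gaussian-type tail and $\sqrt n$ through averaging, matching the probability statement of the lemma.

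Next I would establish local strong convexity of $R_n$. Taking expectations and using Condition \ref{con:3}, the population curvature near $\btheta_j^*$ is $\succeq \underline{f}\sum_\ell\mathbb{E}[\bz_\ell\bz_\ell^\top]$, whose smallest eigenvalue is, by the block structure ($\mathbb{E}x_j=0$, $\mathrm{Var}(x_j)=1$) and the eigenvalue control of $\sum_\ell\mathbf{B}(\tau_\ell)\mathbf{B}(\tau_\ell)^\top$ from Condition \ref{con:4} and Lemma \ref{useful:1}, of order $L/N$. Hence, for $\|\bdelta\|_2$ in a shrinking neighborhood one expects $R_n(\bdelta)\gtrsim n\,\underline{f}\,(L/N)\,\|\bdelta\|_2^2$. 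The delicate point is that $R_n$ is an integral of indicators and is itself random; I would lower-bound it by its expectation plus a uniform stochastic remainder, the latter controlled by a concentration/empirical-process bound over the (shrinking) ball, again on an event of probability $1-ne^{-D_1c_n^2}$. Finally, I would invoke the convexity lemma: on the sphere $\|\bdelta\|_2=R$ with $R=D_5c_nNn^{-1/2}$, the curvature term dominates the linear term (by Cauchy--Schwarz, $|\bdelta^\top\bS_n|\le R\|\bS_n\|_2$), so $\Gamma_n(\bdelta)>0$ there; convexity then forces the minimizer $\hat\bdelta$ into the ball of radius $R$. The assumptions $c_nNn^{-1/2}=o(1)$ and $c_n\to\infty$ guarantee $R\to0$ so the local quadratic bound is valid, while $N=o(L)$ keeps the spline Gram eigenvalues well separated.

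I expect the main obstacle to be the simultaneous bookkeeping of the two stochastic pieces with the correct B-spline normalization: getting the $N$-dependence of $\|\bS_n\|_2$ and of the curvature eigenvalue right so that their ratio reproduces the stated $c_nNn^{-1/2}$ rate, and making the lower bound on the integral curvature term $R_n$ hold uniformly over the ball rather than only in expectation. The quantile scores $\psi_{\tau_\ell}$ at nearby levels are strongly positively correlated, so cancellation across $\ell$ is limited; tracking this honestly through the vector concentration — rather than over-optimistically assuming independence across $\ell$ — is where the careful (non-optimal) constants $D_1,\dots,D_5$ will be spent.
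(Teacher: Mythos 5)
Your plan is correct in outline and arrives at the right radius, but it takes a genuinely different decomposition from the paper's. You split the recentered objective into a first‑order score term $-\bdelta^\top\bS_n$ and a nonnegative curvature term $R_n(\bdelta)$ and control each separately. The paper never isolates the score: starting from the convexity reduction \eqref{master:form}, it lower‑bounds the \emph{population} excess risk on the sphere by $\big(\underline{f}/2-O(\delta)\big)\,\delta^2\,\lambda_{\min}\big(\tfrac1L\sum_\ell\bB(\tau_\ell)\bB(\tau_\ell)^\top\big)\gtrsim\delta^2/N$ via Knight's identity, Condition \ref{con:3} and Lemma \ref{useful:1}, and upper‑bounds the entire centered empirical increment $\sup_{\|\btheta-\btheta^*\|_2\le\delta}\big|\tfrac1n\sum_i(v_i(\btheta)-\mathbb{E}v_i(\btheta))\big|$ in one shot: since the check loss is $1$‑Lipschitz, symmetrization plus contraction gives an expectation bound $D_2\delta n^{-1/2}$ and a bounded‑difference inequality gives the $e^{-D_1c_n^2}$ tail. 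That single Lipschitz/contraction step absorbs both your score term and your curvature fluctuation simultaneously, so the paper needs no population first‑order condition, no coordinatewise union bound over the $2N$ spline coordinates, and no separate centering of $R_n$. What your route buys is a potentially sharper contribution from the isolated score term (its natural size is $c_n\sqrt{N}n^{-1/2}$ rather than $c_nNn^{-1/2}$), but—exactly as you anticipate—the uniform deviation of $R_n$ from its mean over the shrinking ball is the bottleneck and forces the same $c_nNn^{-1/2}$ radius, so the final rate is unchanged. Two small cautions: the union bound over $2N$ coordinates adds a factor of $N$ to the failure probability that is not automatically absorbed into $ne^{-D_1c_n^2}$ for an arbitrary $c_n\to\infty$ (it is for the choice $c_n=\sqrt{2D_1^{-1}\log n}$ used in Theorem \ref{c3:thm:one}), so either bound $\|\bS_n\|_2$ at the vector level or fold the score into the contraction argument; and the worry about positive correlation of $\psi_{\tau_\ell}$ across nearby $\ell$ is immaterial at this crude level—the paper simply pulls the sum over $\ell$ outside and uses $\|\bB(\tau_\ell)\|_2\le1$.
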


\begin{proof}
For notational simplicity, we will drop the subscript $j$ until near the end of the proof. Let $\btheta=(\ba, \bb)$ and introduce the following notations:
\begin{align*}
\ell_n(\btheta)&:=\frac{1}{nL}\sum_{\ell=1}^L\sum_{i=1}^n\rho_{\tau_{\ell}}\big(y_i-x_{i}\mathbf{B}(\tau_{\ell})^T\bb-\mathbf{B}(\tau_{\ell})^T\ba\big) \\
\ell(\btheta)&:=\frac{1}{L}\sum_{\ell=1}^L \mathbb{E}\rho_{\tau_{\ell}}\big(y-x \mathbf{B}(\tau_{\ell})^T\bb-\mathbf{B}(\tau_{\ell})^T\ba\big) \\
v_i(\btheta)&:=\frac{1}{L}\sum_{\ell=1}^L \Big[\rho_{\tau_{\ell}}\big(y_i-x_i \mathbf{B}(\tau_{\ell})^T\bb-\mathbf{B}(\tau_{\ell})^T\ba\big)-\rho_{\tau_{\ell}}\big(y_i-x_i \mathbf{B}(\tau_{\ell})^T\bb^*-\mathbf{B}(\tau_{\ell})^T\ba^*\big)\Big]
\end{align*}
A standard argument based on convexity of $\ell_n(\btheta)$, see \cite{hjort2011asymptotics} for example, shows that $\forall \delta>0$,
\begin{align}
\label{master:form}
&~\mathbb{P}\Big(\|\btheta-\btheta^*\|_2\geq \delta\Big) \nonumber \\
\leq &~ \mathbb{P}\Bigg(\inf_{\|\btheta-\btheta^*\|_2= \delta}\ell(\btheta)-\ell(\btheta^*)\leq  \sup_{\|\btheta-\btheta^*\|_2\leq \delta} \Big|\frac{1}{n}\sum_{i=1}^n(v_i(\btheta)-\mathbb{E}v_i(\btheta))\Big| \Bigg).
\end{align}
To utilize \eqref{master:form}, we first lower bound $\inf_{\|\btheta-\btheta^*\|_2= \delta}\ell(\btheta)-\ell(\btheta^*)$. Note that $\btheta^*\in \argmin_{\btheta}\ell(\btheta)$ according to \eqref{star:def}. This optimality together with the identity of Knight yields that for $\theta$ satisfying $\|\btheta-\btheta^*\|_2= \delta$,
\begin{align}
\label{lower:bound:ell}
 &~\ell(\btheta)-\ell(\btheta^*)  \nonumber \\
 =&~\frac{1}{L}\sum_{\ell=1}^L\mathbb{E}\int_0^{x\bB(\tau_{\ell})^T(\bb-\bb^*)+\bB(\tau_{\ell})^T(\ba-\ba^*)}\Big(F_{y|x}(x \mathbf{B}(\tau_{\ell})^T\bb^*+\mathbf{B}(\tau_{\ell})^T\ba^*+t)\nonumber \\
 &\hspace{2cm} -F_{y|x}(x \mathbf{B}(\tau_{\ell})^T\bb^*+\mathbf{B}(\tau_{\ell})^T\ba^*)\Big)dt \nonumber \\
 \geq &~ \frac{\underline{f}}{2L}\sum_{\ell=1}^L\mathbb{E}\big|x\bB(\tau_{\ell})^T(\bb^*-\bb)+\bB(\tau_{\ell})^T(\ba^*-\ba)\big|^2-\frac{\bar{f}'}{6L}\sum_{\ell=1}^L\mathbb{E}\big|x\bB(\tau_{\ell})^T(\bb^*-\bb)+\bB(\tau_{\ell})^T(\ba^*-\ba)\big|^3 \nonumber \\
 \geq &~ \Big( \frac{\underline{f}}{2}-\frac{\bar{f}'}{6}\delta (1+K_x) \Big)\cdot \frac{1}{L}\sum_{\ell=1}^L\Big((\bB(\tau_{\ell})^T(\bb^*-\bb))^2+(\bB(\tau_{\ell})^T(\ba^*-\ba))^2\Big) \nonumber \\
 \geq &~\Big( \frac{\underline{f}}{2}-\frac{\bar{f}'}{6}\delta (1+K_x) \Big)\delta^2 \lambda_{min}\Big(\frac{1}{L}\sum_{\ell=1}^L\bB(\tau_{\ell})\bB^T(\tau_{\ell})\Big).
 \end{align}
 Here, in the second inequality we have used Condition \ref{con:1} and $\|\bB(\tau_{\ell})\|_2\leq 1$ to employ Cauchy–Schwarz inequality. We now turn to upper bounding $G:=\sup_{\|\btheta-\btheta^*\|_2\leq \delta} \Big|\frac{1}{n}\sum_{i=1}^n(v_i(\btheta)-\mathbb{E}v_i(\btheta))\Big|$. Note that the $v_i(\theta)$'s are i.i.d. random functions, and 
 \begin{align*}
 \sup_{\|\btheta-\btheta^*\|_2\leq \delta}|v_i(\btheta)| &\leq \sup_{\|\btheta-\btheta^*\|_2\leq \delta} \frac{1}{L}\sum_{\ell=1}^L|x_i\bB(\tau_{\ell})^T(\bb-\bb^*)+\bB(\tau_{\ell})^T(\ba-\ba^*)| \leq (1+K_x)\delta.
 \end{align*}
 Hence, we can apply bounded difference concentration inequality to obtain
 \begin{align}
 \label{bound:diff:eq}
 \mathbb{P}\big(G\geq \mathbb{E}(G)+ t\big)\leq \exp\Big(\frac{-D_1nt^2}{\delta^2}\Big), \quad \forall t>0.
 \end{align}
 Moreover, we apply symmetrization and contraction to derive 
 \begin{align}
 \label{bound:exp}
 \mathbb{E}(G) &\leq 2\mathbb{E}\Big(\sup_{\|\btheta-\btheta^*\|_2\leq \delta} \Big|\frac{1}{n}\sum_{i=1}^n\epsilon_iv_i(\btheta)\Big|\Big) \nonumber \\
 &\leq \frac{2}{L}\sum_{\ell=1}^L\mathbb{E}\Big(\sup_{\|\btheta-\btheta^*\|_2\leq \delta} \Big|\frac{1}{n}\sum_{i=1}^n\epsilon_i\big[\rho_{\tau_{\ell}}\big(y_i-x_i \mathbf{B}(\tau_{\ell})^T\bb-\mathbf{B}(\tau_{\ell})^T\ba\big) \nonumber \\
 &\hspace{3cm} -\rho_{\tau_{\ell}}\big(y_i-x_i \mathbf{B}(\tau_{\ell})^T\bb^*-\mathbf{B}(\tau_{\ell})^T\ba^*\big)\big]\Big|\Big) \nonumber \\
 &\leq~\frac{4}{L}\sum_{\ell=1}^L \mathbb{E}\Big(\sup_{\|\btheta-\btheta^*\|_2\leq \delta} \Big|\frac{1}{n}\sum_{i=1}^n\epsilon_i\big(x_i\mathbf{B}(\tau_{\ell})^T(\bb-\bb^*)+\mathbf{B}(\tau_{\ell})^T(\ba-\ba^*)\big)\Big|\Big)  \nonumber \\
 &\leq~4 \delta  \mathbb{E}\sqrt{\big|\frac{1}{n}\sum_{i=1}^n\epsilon_ix_i\big|^2+\big|\frac{1}{n}\sum_{i=1}^n\epsilon_i\big|^2} \leq D_2 \delta n^{-1/2},
 \end{align}
 where $\epsilon_i's$ are independent symmetric Bernoulli variables, and in the fourth inequality we have used Cauchy–Schwarz inequality and the fact $\|\bB(\tau_{\ell})\|_2\leq 1$. Putting together \eqref{master:form}-\eqref{bound:exp} gives
 \begin{align*}
 &\mathbb{P}\Big(\|\hat{\bb}_j-\bb_j^*\|_2\geq \delta \Big) \\
 \leq & \mathbb{P}\Bigg((D_3-D_4\delta)\delta^2 \lambda_{min}\Big(\frac{1}{L}\sum_{\ell=1}^L\bB(\tau_{\ell})\bB^T(\tau_{\ell})\Big)\leq G {\rm~and~} G\leq D_2\delta n^{-1/2}+t \Bigg)+\exp\Big(\frac{-D_1nt^2}{\delta^2}\Big).
 \end{align*}
 Based on Lemma \ref{useful:1}, choosing $t=c_n\delta n^{-1/2}, \delta=D_5c_n N n^{-1/2}$ in the above inequality yields
 \[
 \mathbb{P}\Big(\|\hat{\bb}_j-\bb_j^*\|_2\geq D_5c_n N n^{-1/2} \Big) \leq \exp(-D_1 c_n^2),
 \]
 under the condition $N=o(L), c_n Nn^{-1/2}=o(1), c_n\rightarrow \infty$ and the constant $D_5$ is chosen large enough. Finally, employing a union bound over $j$ completes the proof.

\end{proof}

\begin{lemma}
\label{main:result:2}
Assume $N^{-d}=o(n^{-\gamma})$ and Conditions \ref{con:1}-\ref{con:5}.  It holds that 
\begin{align*}
\frac{1}{L}\sum_{\ell=1}^L(\bB(\tau_{\ell})^T\bb_j^*)^2 \geq \frac{\kappa^2}{2} n^{-2\gamma}, \quad \forall j\in \mathcal{M}(\Delta).
\end{align*}
\end{lemma}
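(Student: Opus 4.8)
The plan is to show that on the grid $\{\tau_\ell\}$ the population spline coefficient $\bb_j^*$ reproduces the true marginal coefficient function $f_j$ up to the spline approximation error, and then to transfer the integrated signal bound of Condition~\ref{con:5}, which is stated for $f_j$, to its spline surrogate $\bB(\tau_\ell)^T\bb_j^*$. As a first ingredient I would bring in an ideal spline approximant: by the smoothness Condition~\ref{con:2}, the knot-allocation Condition~\ref{con:4}, and standard B-spline approximation theory \citep{schumaker2007spline}, there exist coefficient vectors $(\ba_j^\diamond,\bb_j^\diamond)$ with
\[
\sup_{\tau\in\Delta}\big|f_j(\tau)-\bB(\tau)^T\bb_j^\diamond\big|\le D_1 N^{-d},\qquad \sup_{\tau\in\Delta}\big|g_j(\tau)-\bB(\tau)^T\ba_j^\diamond\big|\le D_1 N^{-d},
\]
uniformly over $j\in\mathcal{M}(\Delta)$.

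The core step is a two-sided curvature bound. Write $M_j(\ba,\bb)=\frac1L\sum_{\ell}\mathbb{E}\rho_{\tau_\ell}(y-x_j\bB(\tau_\ell)^T\bb-\bB(\tau_\ell)^T\ba)$ and $M_j^*=\frac1L\sum_{\ell}\mathbb{E}\rho_{\tau_\ell}(y-x_jf_j(\tau_\ell)-g_j(\tau_\ell))$, the latter being the unconstrained pointwise minimum attained coordinatewise by $(f_j,g_j)$ per \eqref{c3:pop:gf}. Using Knight's identity and the density bounds of Condition~\ref{con:3} exactly as in \eqref{lower:bound:ell}, and setting $u_\ell=\bB(\tau_\ell)^T\bb-f_j(\tau_\ell)$, $v_\ell=\bB(\tau_\ell)^T\ba-g_j(\tau_\ell)$, the nonnegative excess risk $\mathcal{E}_j:=M_j-M_j^*$ is comparable to $\frac1L\sum_\ell\mathbb{E}(x_ju_\ell+v_\ell)^2$, with the cubic Taylor remainder controlled through $|x_j|\le K_x$ (Condition~\ref{con:1}). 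The key simplification is that, because the covariate is standardized ($\mathbb{E}x_j=0$, ${\rm Var}(x_j)=1$), the cross term vanishes and $\mathbb{E}(x_ju_\ell+v_\ell)^2=u_\ell^2+v_\ell^2\ge u_\ell^2$. Feeding the \emph{upper} curvature bound at $(\ba_j^\diamond,\bb_j^\diamond)$, where both $u_\ell$ and $v_\ell$ are $O(N^{-d})$, into the optimality $M_j(\ba_j^*,\bb_j^*)\le M_j(\ba_j^\diamond,\bb_j^\diamond)$, and then invoking the \emph{lower} curvature bound at $(\ba_j^*,\bb_j^*)$, yields
\[
\frac1L\sum_{\ell=1}^L\big(\bB(\tau_\ell)^T\bb_j^*-f_j(\tau_\ell)\big)^2\le D_2 N^{-2d},\qquad \forall j\in\mathcal{M}(\Delta).
\]

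It remains to combine this $L^2$ estimate with the signal condition. Jensen's inequality turns Condition~\ref{con:5} into $\frac1L\sum_\ell f_j(\tau_\ell)^2\ge\big(\frac1L\sum_\ell|f_j(\tau_\ell)|\big)^2\ge\kappa^2 n^{-2\gamma}$, and the reverse triangle inequality in the discrete $L^2(\{\tau_\ell\})$ norm gives
\[
\sqrt{\frac1L\sum_{\ell=1}^L\big(\bB(\tau_\ell)^T\bb_j^*\big)^2}\ \ge\ \kappa n^{-\gamma}-\sqrt{D_2}\,N^{-d}.
\]
Since $N^{-d}=o(n^{-\gamma})$, the right-hand side is eventually at least $\frac{\kappa}{\sqrt2}n^{-\gamma}$ for every $j$, and squaring delivers the claimed bound $\frac1L\sum_\ell(\bB(\tau_\ell)^T\bb_j^*)^2\ge\frac{\kappa^2}{2}n^{-2\gamma}$.

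The main obstacle is the second step, namely converting loss-optimality of $\bb_j^*$ into $L^2$ closeness to $f_j$. The delicate point is that the quadratic lower bound on $\mathcal{E}_j$ carries a competing cubic remainder, so it applies cleanly only once one knows the deviations $\{u_\ell,v_\ell\}$ are moderate; establishing this self-bounding step---using convexity of the check loss to propagate the local quadratic growth already computed in \eqref{lower:bound:ell} to the unknown minimizer---is where the real work lies, whereas the spline-approximation and signal-transfer steps are routine.
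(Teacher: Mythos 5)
Your proposal follows essentially the same route as the paper's proof of Lemma~\ref{main:result:2}: an ideal spline approximant with $O(N^{-d})$ sup-norm error, the optimality sandwich $M_j^*\le M_j(\ba_j^*,\bb_j^*)\le M_j(\ba_j^\diamond,\bb_j^\diamond)$ combined with Knight's identity to bound the excess risk by $D_2N^{-2d}$, and a reverse triangle inequality against Condition~\ref{con:5}. The one step you defer as ``where the real work lies'' --- lower-bounding the excess risk at the unknown minimizer in the presence of the competing cubic remainder --- is precisely the content of the paper's Lemma~\ref{useful:2}, which resolves it by a Belloni--Chernozhukov-style two-regime self-calibration argument (quadratic minorant for small deviations, a rescaling-plus-convexity linear minorant for large ones, then Jensen applied to the convex envelope $h_2$), delivering an $L^1$ rather than $L^2$ bound on $\frac1L\sum_\ell|\bB(\tau_\ell)^T\bb_j^*-f_j(\tau_\ell)|$, which serves the final triangle-inequality step just as well.
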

\begin{proof}
Under Conditions \ref{con:2} and \ref{con:4}, it is known (Theorem 6.31 in \cite{schumaker2007spline}) that there exist $\ba_j^0,\bb_j^0\in \mathbb{R}^N$ and a constant $D_1>0$ such that 
\begin{align}
\label{approx:prop}
\sup_{\tau\in \Delta} |f_j(\tau)-\bB(\tau)^T\bb_j^0|\leq D_1 N^{-d}, \quad \sup_{\tau \in \Delta} |g_j(\tau)-\bB(\tau)^T\ba_j^0|\leq D_1 N^{-d}.
\end{align}
According to the definitions in \eqref{star:def} and \eqref{c3:pop:gf}, we have
\begin{align}
\label{chain:of:two}
\frac{1}{L}\sum_{\ell=1}^L\mathbb{E}\rho_{\tau_{\ell}}(y-x_jf_j(\tau_{\ell})-g_j(\tau_{\ell})) &\leq \frac{1}{L}\sum_{\ell=1}^L\mathbb{E}\rho_{\tau_{\ell}}(y-x_j\bB(\tau_{\ell})^T\bb_j^*-\bB(\tau_{\ell})^T\ba_j^*) \nonumber \\
&\leq \frac{1}{L}\sum_{\ell=1}^L\mathbb{E}\rho_{\tau_{\ell}}(y-x_j\bB(\tau_{\ell})^T\bb_j^0-\bB(\tau_{\ell})^T\ba_j^0).
\end{align}
Then using the identity of Knight like in \eqref{knight:first}, we obtain
\begin{align*}
0& \leq  \frac{1}{L}\sum_{\ell=1}^L\big[\mathbb{E}\rho_{\tau_{\ell}}(y-x_j\bB(\tau_{\ell})^T\bb_j^0-\bB(\tau_{\ell})^T\ba_j^0)-\mathbb{E}\rho_{\tau_{\ell}}(y-x_jf_j(\tau_{\ell})-g_j(\tau_{\ell}))\big] \\
&\leq  \frac{1}{L}\sum_{\ell=1}^L\Big(\frac{\bar{f}}{2}\mathbb{E}|x_j(\bB(\tau_{\ell})^T\bb_j^0-f_j(\tau_{\ell})) +\bB(\tau_{\ell})^T\ba_j^0-g_j(\tau_{\ell})|^2\\
&\hspace{2cm} +\frac{\bar{f}'}{6}\mathbb{E}|x_j(\bB(\tau_{\ell})^T\bb_j^0-f_j(\tau_{\ell}))+\bB(\tau_{\ell})^T\ba_j^0-g_j(\tau_{\ell})|^3\Big) \\
&\leq D_2 N^{-2d}  \quad {\rm when~}N{\rm~is~large},
\end{align*}
where the last inequality is due to \eqref{approx:prop} and Condition \ref{con:1}. The above combined with \eqref{chain:of:two} yields
\begin{align}
\label{upper:use:exp}
\frac{1}{L}\sum_{\ell=1}^L\big[\mathbb{E}\rho_{\tau_{\ell}}(y-x_j\bB(\tau_{\ell})^T\bb_j^*-\bB(\tau_{\ell})^T\ba_j^*)-\mathbb{E}\rho_{\tau_{\ell}}(y-x_jf_j(\tau_{\ell})-g_j(\tau_{\ell}))\big] \leq D_2 N^{-2d}.
\end{align}
Lemma \ref{useful:2} with $\tau$ uniformly distributed over $\{\tau_{\ell}\}_{\ell=1}^L$, together with \eqref{upper:use:exp}, shows that when $N$ is large, 
\begin{align*}
&~\frac{\underline{f}}{4}\Big(\frac{1}{L}\sum_{\ell=1}^L|\bB(\tau_{\ell})^T\bb_j^*-f_j(\tau_{\ell})|\Big)^2 \\
\leq &~\frac{1}{L}\sum_{\ell=1}^L\big[\mathbb{E}\rho_{\tau_{\ell}}(y-x_j\bB(\tau_{\ell})^T\bb_j^*-\bB(\tau_{\ell})^T\ba_j^*)-\mathbb{E}\rho_{\tau_{\ell}}(y-x_jf_j(\tau_{\ell})-g_j(\tau_{\ell}))\big] \\
\leq & D_2 N^{-2d},
\end{align*}
which implies 
\[
\frac{1}{L}\sum_{\ell=1}^L|\bB(\tau_{\ell})^T\bb_j^*-f_j(\tau_{\ell})|\leq D_3 N^{-d}. 
\]
We thus conclude
\begin{align}
\label{last2:result}
\sqrt{\frac{1}{L}\sum_{\ell=1}^L(\bB(\tau_{\ell})^T\bb_j^*)^2}&\geq \frac{1}{L}\sum_{\ell=1}^L|\bB(\tau_{\ell})^T\bb_j^*| \nonumber \\
&\geq \frac{1}{L}\sum_{\ell=1}^L|f_j(\tau_{\ell})|-\frac{1}{L}\sum_{\ell=1}^L|\bB(\tau_{\ell})^T\bb_j^*-f_j(\tau_{\ell})| \geq \kappa n^{-\gamma}-D_3N^{-d},
\end{align}
where the first inequality is by Cauchy–Schwarz inequality, the second is by triangle inequality, and the last one is from Condition \ref{con:5}.
\end{proof}

\begin{lemma}
\label{useful:2}
Under Conditions \ref{con:1} and \ref{con:3}, it holds that $\forall j\in \mathcal{M}(\Delta)$,
\begin{align*}
&~\mathbb{E}\rho_{\tau}(y-x_jf(\tau)-g(\tau))-\mathbb{E}\rho_{\tau}(y-x_j f_j(\tau)-g_j(\tau)) \\
\geq &~
\begin{cases}
\frac{\underline{f}}{4}(\mathbb{E}|f(\tau)-f_j(\tau)|)^2 & \text{if~} \mathbb{E}|f(\tau)-f_j(\tau)| \leq \frac{q}{2},  \\
\frac{\underline{f}}{4} (q\mathbb{E}|f(\tau)-f_j(\tau)|-\frac{q^2}{4})& \text{if~} \mathbb{E}|f(\tau)-f_j(\tau)| > \frac{q}{2},
\end{cases}
\end{align*}
where $q=\frac{3\underline{f}}{2\bar{f}'(1+K_x)}$, and  the expectation $\mathbb{E}(\cdot)$ is allowed to be taken additionally with respect to $\tau$ that is independent from $(y, x_j)$.
\end{lemma}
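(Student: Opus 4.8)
The plan is to prove the estimate pointwise in $\tau$ and then pass to the stated (possibly $\tau$-averaged) form by Jensen's inequality. Fix $\tau$ and write $\phi(f,g):=\mathbb{E}\rho_\tau(y-x_jf-g)$, which is convex and minimized at $(f_j,g_j)$ by \eqref{c3:pop:gf}. Since the claimed lower bound depends only on $f-f_j$ yet must hold for \emph{every} intercept $g$ (in the application of Lemma~\ref{main:result:2} one takes $g=\bB(\tau)^T\ba_j^*$, not the optimal intercept), the first move is to minimize out the intercept: because $\phi(f,g)\ge\psi(f):=\min_{g'}\phi(f,g')$ and $\psi(f_j)=\phi(f_j,g_j)$, it suffices to bound $\psi(f)-\psi(f_j)$ from below, i.e. to work with the loss-optimal intercept for each $f$.

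The curvature comes from Knight's identity, as already used in \eqref{lower:bound:ell}. Setting $t_0=x_jf_j+g_j$, $u=y-t_0$, $a=f-f_j$, $c=g-g_j$ and $\delta=x_ja+c$, the identity gives
\[
\phi(f,g)-\phi(f_j,g_j)=-\mathbb{E}\big[\delta(\tau-I(u\le0))\big]+\mathbb{E}\,\Psi(\delta),\qquad \Psi(\delta):=\int_0^\delta\big(F_{y|x_j}(t_0+s)-F_{y|x_j}(t_0)\big)\,ds .
\]
The stationarity conditions $\mathbb{E}[x_j(\tau-I(u\le0))]=0$ and $\mathbb{E}[\tau-I(u\le0)]=0$ at $(f_j,g_j)$ annihilate the linear term for \emph{any} $g$, so $\psi(f)-\psi(f_j)=\min_c\mathbb{E}_{x_j}[\Psi(x_ja+c)]$, where $\Psi$ is convex with $\Psi(0)=0$, $\Psi'(\delta)=F_{y|x_j}(t_0+\delta)-F_{y|x_j}(t_0)$.

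For the first branch I would use Condition~\ref{con:3} ($p_{y|x_j}(t_0)\ge\underline{f}$, $|p'_{y|x_j}|\le\bar{f}'$), which integrates to $\Psi(\delta)\ge\tfrac{\underline{f}}{2}\delta^2-\tfrac{\bar{f}'}{6}|\delta|^3$ for both signs of $\delta$. Next I would control the optimal intercept $c^\ast$: since $\Psi'$ has the sign of its argument, at $c=K_x|a|$ one has $x_ja+c\ge0$ a.s. by Condition~\ref{con:1}, hence $\tfrac{d}{dc}\mathbb{E}\Psi=\mathbb{E}[\Psi'(x_ja+c)]\ge0$, and symmetrically at $c=-K_x|a|$; thus $|c^\ast|\le K_x|a|$ and $|\delta|\le2K_x|a|$ a.s. Using the standardization $\mathbb{E}x_j=0,\ \mathbb{E}x_j^2=1$ (so $\mathbb{E}\delta^2=a^2+(c^\ast)^2\ge a^2$) and $\mathbb{E}|\delta|^3\le2K_x|a|\,\mathbb{E}\delta^2$,
\[
\psi(f)-\psi(f_j)\ \ge\ \big(a^2+(c^\ast)^2\big)\Big(\tfrac{\underline{f}}{2}-\tfrac{\bar{f}'K_x}{3}|a|\Big)\ \ge\ \tfrac{\underline{f}}{4}a^2\qquad\text{whenever }|a|\le q/2,
\]
since $q/2=\tfrac{3\underline{f}}{4\bar{f}'(1+K_x)}\le\tfrac{3\underline{f}}{4\bar{f}'K_x}$. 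This gives the quadratic branch.

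For $|a|>q/2$ I would glue on the linear branch by convexity of $\Xi(a):=\psi(f_j+a)-\psi(f_j)$: anchoring at the knee gives $\Xi(a)\ge\Xi(q/2)+\Xi'_+(q/2)(|a|-q/2)$, and combining $\Xi(q/2)\ge\tfrac{\underline{f}}{4}(q/2)^2$ with the slope bound $\Xi'_+(q/2)\ge\tfrac{\underline{f}q}{4}$ (obtained from $\Xi'(a)=\mathbb{E}[x_j\,\Psi'(x_ja+c^\ast)]$ together with the positive association of $x_j$ and $\Psi'(x_ja+c^\ast)$ and the density lower bound) yields $\Xi(a)\ge\tfrac{\underline{f}}{4}(q|a|-q^2/4)$. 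The two branches are exactly the convex function $\tilde h(u)=\tfrac{\underline{f}}{4}u^2$ for $u\le q/2$ and $\tfrac{\underline{f}}{4}(qu-q^2/4)$ otherwise; applying the pointwise estimate $\mathbb{E}_{x_j,y}[\,\cdot\,]\ge\tilde h(|f(\tau)-f_j(\tau)|)$ and Jensen over $\tau$ gives $\mathbb{E}_\tau\tilde h(|f-f_j|)\ge\tilde h(\mathbb{E}_\tau|f-f_j|)$, which is precisely the two-case statement. The main obstacle is the decoupling of the intercept in the third paragraph: because the bound must be expressed through $f-f_j$ alone and remain valid for arbitrary $g$, one is forced to pass to the loss-optimal intercept and to establish $|c^\ast|\le K_x|a|$; matching the exact threshold $q$ and the factor $1+K_x$, and in particular securing the slope bound $\Xi'_+(q/2)\ge\underline{f}q/4$ that stitches the quadratic and linear regimes together, is the delicate part of the argument.
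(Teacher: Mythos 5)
Your overall architecture matches the paper's: Knight's identity plus the first-order optimality of $(f_j,g_j)$ from \eqref{c3:pop:gf} to kill the linear term, the density bounds of Condition \ref{con:3} to get local quadratic growth, a convexity argument to pass to a linear lower bound far from the minimizer, and Jensen over $\tau$ applied to the convex two-branch function. The quadratic branch of your argument is sound (including the bound $|c^\ast|\le K_x|a|$ on the loss-optimal intercept). However, the intercept-elimination step is unnecessary, and your stated motivation for it is not correct: the paper keeps the joint perturbation $\delta=x_j(f-f_j)+(g-g_j)$ and simply uses that, by the standardization of $x_j$, $\mathbb{E}\delta^2=(f-f_j)^2+(g-g_j)^2\ge (f-f_j)^2$, so a monotone lower bound in $\sqrt{\mathbb{E}\delta^2}$ automatically dominates the same bound in $|f-f_j|$ for \emph{arbitrary} $g$. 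No passage to the optimal intercept is needed.

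The genuine gap is in your linear branch. You assert the one-sided slope bound $\Xi'_+(q/2)\ge \underline{f}q/4$ and justify it only by ``positive association of $x_j$ and $\Psi'(x_ja+c^\ast)$ and the density lower bound''; this is not a proof, and an association argument is delicate here because $\Psi'$ depends on $x_j$ both through its argument and through the conditional law $F_{y|x_j}$. The obvious rigorous route — the secant bound from convexity, $\Xi'_+(q/2)\ge \tfrac{2}{q}\,\Xi(q/2)\ge \tfrac{2}{q}\cdot\tfrac{\underline{f}}{4}(q/2)^2=\underline{f}q/8$ — delivers only half the required slope, so for large $|a|$ you obtain $\underline{f}q|a|/8$ rather than the target $\tfrac{\underline{f}}{4}(q|a|-q^2/4)\approx \underline{f}q|a|/4$. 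Extending your quadratic branch to $|a|\le q$ and then rescaling would fix this, but with your constants that extension requires $|a|\le 3\underline{f}/(4\bar{f}'K_x)$, which covers $|a|\le q$ only when $K_x\le 1$. The paper avoids all of this with an interpolation-to-the-minimizer trick (see \eqref{result:second}): when $\mathbb{E}\delta^2>q^2$, rescale $(f,g)$ toward $(f_j,g_j)$ by $t=q/\sqrt{\mathbb{E}\delta^2}$ so that the rescaled perturbation has norm exactly $q$, apply the quadratic bound there (which holds up to norm $q$, not $q/2$, when phrased in terms of $\sqrt{\mathbb{E}\delta^2}$ with the factor $1+K_x$), and use convexity of the population check loss along the segment to divide by $t$. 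This is what produces the full rate $\tfrac{\underline{f}q}{4}\sqrt{\mathbb{E}\delta^2}$, after which one takes the convex minorant $h_2$ of $\tfrac{\underline{f}}{4}\min(u^2,qu)$ and applies Jensen exactly as you do. You should replace your slope-bound claim with this scaling argument (or supply an actual proof of $\Xi'_+(q/2)\ge\underline{f}q/4$).
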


\begin{proof}
The proof is motivated by the proof of Lemma 4 in \cite{belloni2011l1}. We use $\mathbb{E}_{-\tau}$ to denote the expectation taken only with respect to $(y,x_j)$. Using the identity of Knight, we first obtain
\begin{align}
\label{knight:first}
&~\mathbb{E}_{-\tau}\rho_{\tau}(y-x_jf(\tau)-g(\tau))-\mathbb{E}_{-\tau}\rho_{\tau}(y-x_j f_j(\tau)-g_j(\tau)) \nonumber \\
=&~ \mathbb{E}_{-\tau}\Big(\big[x_j(f_j(\tau)-f(\tau))+g_j(\tau)-g(\tau)\big]\cdot \big[\tau-\mathbbm{1}(y\leq x_jf_j(\tau)+g_j(\tau))\big]\Big)  \nonumber \\
&~+\mathbb{E}_{-\tau}\int_0^{x_j(-f_j(\tau)+f(\tau))-g_j(\tau)+g(\tau)}\big(F_{y|x_j}(x_jf_j(\tau)+g_j(\tau)+t)-F_{y|x_j}(x_jf_j(\tau)+g_j(\tau))\big)dt  \nonumber \\
 \geq&~ \frac{\underline{f}}{2}\mathbb{E}_{-\tau}|x_j(f(\tau)-f_j(\tau))+g(\tau)-g_j(\tau)|^2-\frac{\bar{f}'}{6}\mathbb{E}_{-\tau}|x_j(f(\tau)-f_j(\tau))+g(\tau)-g_j(\tau)|^3, \nonumber \\
 \geq&~\frac{\underline{f}}{2}\big[(f(\tau)-f_j(\tau))^2+(g(\tau)-g_j(\tau))^2\big]\cdot \Big[1-\frac{\bar{f}'(1+K_x)}{3\underline{f}}\sqrt{(f(\tau)-f_j(\tau))^2+(g(\tau)-g_j(\tau))^2}\Big],
\end{align} 
where in the first inequality we have used Taylor expansion and Condition \ref{con:3} to bound the second term, and the first term equals zero due to the definition of $f_j(\tau),g_j(\tau)$ in \eqref{c3:pop:gf}; the last inequality is due to Condition \ref{con:1} and the fact that $x_j$ is standardized. When $\mathbb{E}_{-\tau}|x_j(f(\tau)-f_j(\tau))+g(\tau)-g_j(\tau)|^2\leq q^2$, \eqref{knight:first} implies
\begin{align}
\label{result:first}
&~\mathbb{E}_{-\tau}\rho_{\tau}(y-x_jf(\tau)-g(\tau))-\mathbb{E}_{-\tau}\rho_{\tau}(y-x_j f_j(\tau)-g_j(\tau)) \nonumber \\
\geq &~\frac{\underline{f}}{4} \cdot \mathbb{E}_{-\tau}|x_j(f(\tau)-f_j(\tau))+g(\tau)-g_j(\tau)|^2.
\end{align}
When $\mathbb{E}_{-\tau}|x_j(f(\tau)-f_j(\tau))+g(\tau)-g_j(\tau)|^2>q^2$, define 
\[
\bar{f}(\tau)=t f(\tau)+(1-t)f_j(\tau), \quad \bar{g}(\tau)=tg(\tau)+(1-t)g_j(\tau), 
\]
with $t=\frac{q}{\sqrt{\mathbb{E}_{-\tau}|x_j(f(\tau)-f_j(\tau))+g(\tau)-g_j(\tau)|^2}} \in (0,1)$. Using the convexity of $\rho_{\tau}(\cdot)$ we have
\begin{align}
\label{result:second}
&~\mathbb{E}_{-\tau}\rho_{\tau}(y-x_jf(\tau)-g(\tau))-\mathbb{E}_{-\tau}\rho_{\tau}(y-x_j f_j(\tau)-g_j(\tau))\nonumber \\
\geq &~\frac{1}{t} \cdot \big(\mathbb{E}_{-\tau}\rho_{\tau}(y-x_j\bar{f}(\tau)-\bar{g}(\tau))-\mathbb{E}_{-\tau}\rho_{\tau}(y-x_j f_j(\tau)-g_j(\tau))\big) \nonumber \\
\geq &~  \frac{\underline{f}q}{4}\sqrt{\mathbb{E}_{-\tau}|x_j(f(\tau)-f_j(\tau))+g(\tau)-g_j(\tau)|^2},
\end{align}
where the second inequality holds by invoking \eqref{result:first} since  $\mathbb{E}_{-\tau}|x_j(\bar{f}(\tau)-f_j(\tau))+\bar{g}(\tau)-g_j(\tau)|^2= q^2$. Denote $h_1(t):=\frac{\underline{f}}{4}\min(t^2,tq)$ and
\begin{align*}
\label{h2:def}
h_2(t):=
\begin{cases}
\frac{\underline{f}}{4}t^2 & \text{if~} t\leq \frac{q}{2} \\
\frac{\underline{f}}{4}(qt-\frac{q^2}{4}) & \text{if~}t>\frac{q}{2}
\end{cases}
\end{align*}
It is straightforward to verify that \eqref{result:first} and \eqref{result:second} together imply
\begin{align*}
&~\mathbb{E}_{-\tau}\rho_{\tau}(y-x_jf(\tau)-g(\tau))-\mathbb{E}_{-\tau}\rho_{\tau}(y-x_j f_j(\tau)-g_j(\tau)) \\
\geq &~h_1(|f(\tau)-f_j(\tau)|) \geq h_2(|f(\tau)-f_j(\tau)|).
\end{align*}
Moreover, $h_2(t)$ is convex over $t\in [0,\infty)$. Hence, we take expectation with respect to $\tau$ for the above inequality and apply Jensen's inequality to conclude
\begin{align*}
\mathbb{E} \rho_{\tau}(y-x_jf(\tau)-g(\tau))-\mathbb{E}\rho_{\tau}(y-x_j f_j(\tau)-g_j(\tau)) \geq h_2(\mathbb{E}|f(\tau)-f_j(\tau)|).
\end{align*}
\end{proof}

\begin{lemma}
\label{useful:1}
Under Condition \ref{con:4}, the B-spline basis vector $\mathbf{B}(\tau)=(B_1(\tau),\ldots, B_N(\tau))$ satisfies
\begin{align*}
\frac{D_1}{N}-\frac{D_3}{L}\leq\lambda_{\min}\Big(\frac{1}{L}\sum_{\ell=1}^L\mathbf{B}(\tau_{\ell})\mathbf{B}(\tau_{\ell})^T\Big )\leq \lambda_{\max}\Big(\frac{1}{L}\sum_{\ell=1}^L\mathbf{B}(\tau_{\ell})\mathbf{B}(\tau_{\ell})^T\Big)\leq \frac{D_2}{N}+\frac{D_3}{L},
\end{align*}
where $D_1,D_2, D_3>0$ are constants depending on $k, c_1,\Delta_l,\Delta_u$. 

\end{lemma}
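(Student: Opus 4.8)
The plan is to compare the discrete average $\hat{\mathbf{G}}:=\frac{1}{L}\sum_{\ell=1}^L\mathbf{B}(\tau_\ell)\mathbf{B}(\tau_\ell)^T$ with its continuous counterpart $\mathbf{G}:=\frac{1}{\Delta_u-\Delta_l}\int_{\Delta_l}^{\Delta_u}\mathbf{B}(\tau)\mathbf{B}(\tau)^T\,d\tau$. Since the $\tau_\ell$ are equispaced over $\Delta=[\Delta_l,\Delta_u]$, the matrix $\hat{\mathbf{G}}$ is exactly the Riemann-sum approximation of $\mathbf{G}$. The strategy is therefore threefold: (i) bound the spectrum of the continuous Gram matrix $\mathbf{G}$, whose eigenvalues are of order $1/N$; (ii) control the discretization error $\mathbf{E}:=\hat{\mathbf{G}}-\mathbf{G}$ in operator norm, showing $\|\mathbf{E}\|_{\mathrm{op}}\le D_3/L$; and (iii) combine the two via Weyl's inequality, which gives $\lambda_{\min}(\hat{\mathbf{G}})\ge \lambda_{\min}(\mathbf{G})-\|\mathbf{E}\|_{\mathrm{op}}$ and $\lambda_{\max}(\hat{\mathbf{G}})\le \lambda_{\max}(\mathbf{G})+\|\mathbf{E}\|_{\mathrm{op}}$, yielding the stated two-sided bound.

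For step (i), I would invoke the classical stability of a normalized B-spline basis under a quasi-uniform knot sequence. Under Condition~\ref{con:4} the knot spacings are comparable, and it is known (see \cite{shen1998local}, or the de~Boor stability estimates in \cite{schumaker2007spline}) that there exist constants $c,C>0$ depending only on $k$ and $c_1$ with $\frac{c}{N}\|\mathbf{v}\|_2^2\le \int_{\Delta_l}^{\Delta_u}(\mathbf{B}(\tau)^T\mathbf{v})^2\,d\tau\le \frac{C}{N}\|\mathbf{v}\|_2^2$ for all $\mathbf{v}\in\mathbb{R}^N$. Dividing by the fixed length $\Delta_u-\Delta_l$ gives $\lambda_{\min}(\mathbf{G})\ge D_1/N$ and $\lambda_{\max}(\mathbf{G})\le D_2/N$, with $D_1,D_2$ depending on $k,c_1,\Delta_l,\Delta_u$. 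This step is a cited fact rather than a computation.

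The real work is step (ii), and it hinges on the \emph{local support} of B-splines: each $B_i$ is supported on at most $k+1$ consecutive knot subintervals, so $B_i(\tau)B_j(\tau)\equiv 0$ unless $|i-j|\le k$. Consequently $\hat{\mathbf{G}}$, $\mathbf{G}$, and hence $\mathbf{E}$ are all banded with bandwidth at most $k$. Entrywise, each $\phi_{ij}:=B_iB_j$ is a compactly supported function of bounded variation with $\mathrm{TV}(\phi_{ij})\le D$ for a constant depending only on $k$ (the normalized B-splines satisfy $0\le B_i\le 1$ and have a bounded number of monotone pieces). A Koksma-type quadrature estimate for equispaced nodes (equivalently, the Riemann-sum error bound using $\|B_i'\|_\infty\lesssim N$ on a support of length $\lesssim 1/N$, against the node spacing $\sim 1/L$) then yields $|\mathbf{E}_{ij}|=\big|\frac1L\sum_{\ell}\phi_{ij}(\tau_\ell)-\frac{1}{\Delta_u-\Delta_l}\int\phi_{ij}\big|\le D/L$ uniformly in $i,j$. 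Because $\mathbf{E}$ is symmetric and banded, its operator norm is controlled by its row sums: $\|\mathbf{E}\|_{\mathrm{op}}\le \max_i\sum_j|\mathbf{E}_{ij}|\le (2k+1)\max_{ij}|\mathbf{E}_{ij}|\le D_3/L$, and inserting this into the Weyl bounds of step (iii) finishes the proof.

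The main obstacle is precisely this operator-norm control of $\mathbf{E}$. A naive route that only uses the entrywise bound $|\mathbf{E}_{ij}|\lesssim 1/L$ over a full $N\times N$ matrix would give $\|\mathbf{E}\|_{\mathrm{op}}\lesssim N/L$, which is too crude: the lower bound $D_1/N-D_3/L$ would then fail to stay positive under $N=o(L)$. The decisive structural fact is the fixed bandwidth, so no factor of $N$ from summing a full row ever appears; this is exactly why the support-width cancellation in the per-entry Riemann error must be tracked carefully and why Condition~\ref{con:4} (comparable spacings and bounded support overlap) is essential.
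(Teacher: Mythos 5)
Your proof is correct and follows essentially the same route as the paper's: the paper simply delegates the argument to Lemma 6.1 of Shen et al.\ (1998) after noting that the equispaced $\tau_\ell$ give $\sup_x|Q_n(x)-Q(x)|\le 2/L$, and that cited proof is exactly your comparison of the discrete Gram matrix with its continuous counterpart (spectrum of order $1/N$ by B-spline stability on quasi-uniform knots) plus a discrepancy/total-variation bound on the banded entries yielding an $O(1/L)$ perturbation. Your Koksma-type entrywise estimate combined with the bandwidth-$k$ row-sum bound and Weyl's inequality is an explicit, self-contained rendering of that same argument.
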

\begin{proof}
The proof is a direct application of Lemma 6.1 in \cite{shen1998local}. Adopting the notations therein, the distributions $Q_n(x)$ and $Q(x)$ in the current setting are
\begin{align*}
Q_n(x)=\frac{1}{L}\sum_{\ell=1}^L \mathbbm{1}(\tau_{\ell}\leq x), \quad Q(x)=\frac{x-\Delta_l}{\Delta_u-\Delta_l}, \quad {\rm~for~} x\in [\Delta_l,\Delta_u].
\end{align*}
Since $\{\tau_{\ell}\}_{\ell=1}^L$ are uniformly spaced over $[\Delta_l,\Delta_u]$, it is straightforward to verify that
\[
\sup_{x\in [\Delta_l,\Delta_u]}|Q_n(x)-Q(x)|\leq \frac{2}{L}.
\]
The rest of the proof directly follows the proof of Lemma 6.1 in \cite{shen1998local}.
\end{proof}

\subsection{Proof of Proposition 2}
We condition on $\bx$ throughout the proof, and drop the conditioning notation for simplicity. Note that the random variable $\tilde{y}$ is defined over $[Q_{y|\textbf{x}} (\Delta_l), Q_{y|\textbf{x}} (\Delta_u)]$, and $\mathbb{P}(\tilde{y}\leq t)=\mathbb{P}(y\leq t)$ for $t\in [Q_{y|\textbf{x}} (\Delta_l), Q_{y|\textbf{x}} (\Delta_u))$ and $\mathbb{P}(\tilde{y}\leq t)=1$ for $t=Q_{y|\textbf{x}} (\Delta_u)$. As a result, it is direct to verify that $Q_{\tilde{y}|\bx}(\tau)=Q_{y|\bx}(\Delta_l)$ for $\tau \leq \Delta_l$ and $Q_{\tilde{y}|\bx}(\tau)=Q_{y|\bx}(\Delta_u)$ for $\tau \geq \Delta_u$. Now for $\tau \in (\Delta_l, \Delta_u)$, if $Q_{y|\bx}(\tau)<Q_{y|\bx}(\Delta_u)$, then $Q_{\tilde{y}|\bx}(\tau)=Q_{y|\bx}(\tau)$ since $\mathbb{P}(\tilde{y}\leq t)=\mathbb{P}(y\leq t)$ near $t=Q_{y|\bx}(\tau)$. If $Q_{y|\bx}(\tau)=Q_{y|\bx}(\Delta_u)$, then we conclude $Q_{\tilde{y}|\bx}(\tau)=Q_{y|\bx}(\Delta_u)$. Otherwise, there exists $t^*<Q_{y|\bx}(\Delta_u)$ such that $\mathbb{P}(y\leq t^*)=\mathbb{P}(\tilde{y}\leq t^*)\geq \tau$ which contradicts with $Q_{y|\bx}(\tau)=Q_{y|\bx}(\Delta_u)>t^*$. The last statement about conditional independence follows because there is a one-to-one mapping between quantile and distribution functions.

\subsection{Simulation Result of Screening Performance for $\mathbf{\textit{p}=5,000}$}
\begingroup
\renewcommand{\arraystretch}{1.5}
\begin{table}[!htbp] 
\centering 
\caption{Minimum model size (MMS) result with $p=5,000$ and $n=300$. The QPCS method can only track at most $n=300$ top variables, so if the MMS exceeded the sample size, the result is reported as `300+'.} 
\label{c3:tab:example2_p5000} 
\begin{tabular}{@{\extracolsep{0.1pt}} cc| cccc | cccc} 
\hline 
\hline 
& &\multicolumn{4}{c|}{$\sigma = 0.5$}  &\multicolumn{4}{c}{$\sigma = 0.8$}\\ 
& &5\% & 35\% & 65\% & 95\% &5\% & 35\% & 65\% & 95\% \\
\hline 
rqSIS & Example 1 & 4 & 4 & 4 & 4 & 4 & 4 & 5 & 7 \\ 
$(\Delta)$ & Example 2 & 4 & 4 & 4 & 4 & 4 & 4 & 4 & 6 \\ \hline
DC-SIS & Example 1 & 51.4 & 445.4 & 1178.4 & 3506 & 13 & 122.9 & 731.3 & 3113.9 \\ 
 & Example 2 & 82 & 454.3 & 994.45 & 2602 & 20.95 & 149.3 & 429.4 & 2123.5 \\ \hline
PC-Screen & Example 1 & 98 & 642 & 1490.6 & 3585 & 14.7 & 153.9 & 826.2 & 3164.9 \\ 
 & Example 2 & 167.8 & 717.6 & 1472.75 & 3005.7 & 25.95 & 263 & 768.8 & 2599.1 \\ \hline 
qaSIS & Example 1 & 83.2 & 1022 & 2555.8 & 4569.4 & 124.8 & 1399.8 & 2972.4 & 4580.3 \\  
 $(\tau=0.25)$ & Example 2 & 431.7 & 2049.65 & 3459.65 & 4794.6 & 449.95 & 2422.3 & 3728.75 & 4826.4 \\ \hline
qaSIS  & Example 1 & 148.4 & 1539.2 & 2946.8 & 4677 & 33.1 & 541 & 1927.2 & 4162.4 \\ 
 $(\tau=0.5)$ & Example 2 & 587.9 & 2300.2 & 3633.1 & 4784.6 & 123.95 & 1354.65 & 2995.35 & 4686.3 \\ \hline
qaSIS  & Example 1 & 55.2 & 870.2 & 2715 & 4675.6 & 16 & 229.9 & 1182 & 3693.9 \\ 
$(\tau=0.75)$ & Example 2 & 335.75 & 1945.25 & 3421.65 & 4724.05 & 32 & 353.95 & 1401.4 & 4265.2 \\ \hline
QPCS  & Example 1 & 4 & 4 & 4 & 300+ & 4 & 4 & 4 & 300+ \\ 
$(\tau=0.25)$ & Example 2 & 50 & 300+ & 300+ & 300+ & 68.95 & 300+ & 300+ & 300+ \\ \hline
QPCS  & Example 1 & 220.4 & 300+ & 300+ & 300+ & 219.3 & 300+ & 300+ & 300+ \\ 
$(\tau=0.5)$ & Example 2 & 300+ & 300+ & 300+ & 300+ & 300+ & 300+ & 300+ & 300+ \\ \hline
QPCS  & Example 1 & 4 & 4 & 4 & 300+ & 4 & 4 & 4 & 300+ \\ 
$(\tau=0.75)$ & Example 2 & 67 & 300+ & 300+ & 300+ & 59.85 & 300+ & 300+ & 300+ \\ 
\hline \hline \\[-1.8ex] 
\end{tabular} 
\end{table} 
\endgroup

\subsection{Some Details for Regional Quantile Hypothesis Test in Section 4}
This procedure is similar to the procedure described in Section 2 of \cite{PARK201713}. For the details of the method, please see the original paper. Here, we are typically interested in testing the following hypothesis:
\begin{align*}
    H_0:~\beta^*_j(\tau) = 0\text{ for all }\tau\in\Delta~~\text{ versus } ~~H_1:~\beta^*_j(\tau) \neq 0\text{ for some }\tau\in\Delta,\text{ and }j\in\{1,\dots,p\}.
\end{align*}
When the null hypothesis is true, we consider the marginal composite quantile regression estimator as follows:
\begin{align}
    \hat{\ba} = \argmin_{\ba \in \mathbb{R}^N} \frac{1}{nL}\sum_{\ell=1}^L\sum_{i=1}^n\rho_{\tau_{\ell}}\big(y_i-\mathbf{B}(\tau_{\ell})^T\ba\big). \label{eq:supp_test}
\end{align}
Then, we define the conditional density matrix and projection matrix as follows:
\begin{align*}
    \bH^{(l)} &= \text{diag}\Big( f_1^{(\tau_l)}(0),\dots,f_n^{(\tau_l)}(0) \Big) \\
    \bP^{(l)} &= \Big((\bh^{(l)})^T\bh^{(l)}\Big)^{-1} \bh^{(l)} (\bh^{(l)})^T
\end{align*}
where $\bh^{(l)} = \bH^{(l)}\mathbf{1}_n$, $f$ are the conditional density functions, and $\mathbf{1}_n \in \mathbb{R}^n$ is the vector of ones. $f$ can be estimated by the methods described in the paper, and $\bH^{(l)}$ can be the identity matrix for the homoskedastic errors, as a special case. We also define $\bx_j = (x_{1j},\dots,x_{nj})^T$, and the projected covariates as $\tilde{\bx}_j^{(l)} = (I_n - \bP^{(k)})\bx_j$.
Then, the quantile rank score test rejects the null hypothesis if $S_{n}^T Q_{n}^{-1}S_n$ is greater than its critical value with
\begin{align*}
    S_n &= \frac{1}{L\sqrt{n}}\sum_{l=1}^L (\tilde{\bx}_j^{(l)})^T\psi_{\tau_l}(\by - \mathbf{1}_n \bB(\tau_l)^T \hat{\ba})\\
    Q_n &= \frac{1}{L^2n}\sum_{l=1}^L\sum_{l'=1}^L(\tau_l \wedge\tau_{l'} - \tau_l\tau_{l'}) (\tilde{\bx}_j^{(l)})^T \tilde{\bx}_j^{(l')}
\end{align*}
where $\psi_{\tau}(\bx) = (\psi_{\tau}(x_1),\dots,\psi_{\tau}(x_n))^T$ for any vector $\bx \in \mathbb{R}^n$ and $\psi_{\tau}(x) = \tau - \mathbbm{1}(x \leq 0)$ for any scalar $x$. The score follows an asymptotical normal distribution, so we reject it based on the critical value from the normal distribution.

\bibliography{mc}
\bibliographystyle{ims}

\end{document}